%
\documentclass[runningheads]{llncs}
\usepackage[T1]{fontenc}
%
\usepackage{amsmath,amssymb,amsfonts}
\usepackage{graphicx}
\usepackage{pifont}
\usepackage{graphicx,subfigure}
\newtheorem{observation}{Observation}
\usepackage[ruled,vlined,linesnumbered]{algorithm2e}
%
%
\begin{document}
\title{Time Optimal Distance-$k$-Dispersion on Dynamic Ring}
%
%
\author{Brati Mondal\inst{1} \and
Pritam Goswami\inst{2}\and
Buddhadeb Sau\inst{1}}
\authorrunning{Brati Mondal, Pritam Goswami and Buddhadeb Sau}
%
\institute{Jadavpur University, Jadavpur, Kolkata, 700032 \and
Sister Nivedita University, Kolkata-700156
\email{bratim.math.rs@jadavpuruniversity.in, pgoswami.cs@gmail.com, buddhadeb.sau@jadavpuruniversity.in}}
\maketitle              
\begin{abstract}
Dispersion by mobile agents is a well studied problem in the literature on computing by mobile robots. In this problem, $l$ robots placed arbitrarily on nodes of a network having $n$ nodes are asked to relocate themselves autonomously so that each node contains at most $\lfloor \frac{l}{n}\rfloor$ robots. When $l\le n$, then each node of the network contains at most one robot. Recently, in NETYS'23, Kaur et al. introduced a variant of dispersion called \emph{Distance-2-Dispersion}. In this problem, $l$ robots have to solve dispersion with an extra condition that no two adjacent nodes contain robots.

In this work, we generalize the problem of Dispersion and Distance-2-Dispersion by introducing another variant called \emph{Distance-$k$-Dispersion (D-$k$-D)}. In this problem, the robots have to disperse on a network in such a way that shortest distance between any two pair of robots is at least $k$ and there exist at least one pair of robots for which the shortest distance is exactly $k$. Note that, when $k=1$ we have normal dispersion and when $k=2$ we have D-$2$-D. Here, we studied this variant for a dynamic ring (1-interval connected ring) for rooted initial configuration. We have proved the necessity of fully synchronous scheduler to solve this problem and provided an algorithm that solves D-$k$-D in $\Theta(n)$ rounds under a fully synchronous scheduler. So, the presented algorithm is time optimal too. To the best of our knowledge, this is the first work that considers this specific variant.

\keywords{Dispersion\and Dynamic ring\and mobile robots\and distributed algorithm.}
\end{abstract}
\section{Introduction}
\label{Section:1}
In the study of distributed systems, researchers have shown significant interest in employing a swarm of mobile computing entities, known as robots, to perform fundamental tasks. These robots, which are simple and inexpensive, collaborate to execute tasks on either discrete domains (such as graphs) or continuous domains (such as planes). Their tasks may include pattern formation, exploration, gathering, and dispersion.  The main motive of this research in this direction is to use limited-capability robots to achieve a robust goal in a distributed way with large-scale practical applications.

In this work, we are interested in the problem of \textit{Dispersion}. In this problem, the aim is to reposition $l$  robots on a graph with $n$ nodes in such a way that each node contains at most $\lfloor \frac{l}{n} \rfloor$ robots. So when $l\le n$, each node must contain at most 1 robot. 
The dispersion problem was first introduced by Augustine and Moses Jr. (\cite{AM2018ICDCN}). This problem is closely related to several extensively studied robot coordination challenges, such as scattering \cite{BFES2011SCATTERINGGRID}, covering, exploration \cite{DDKPU201537EXPLORATION}, self-deployment \cite{EB20117DEPLOYMENT}, and load balancing. The concept of dispersion has numerous practical applications, including relocating self-driving vehicles to recharge stations (where the vehicles act as mobile computing entities with the capability to locate these stations), deploying robots in areas that are inaccessible to humans, and other tasks involving exploration \cite{LDFS2016ICDCS}, gathering \cite{LFPPSV2020TCS}, and uniform distribution.

The dispersion problem has been studied under different communication models such as the local communication model and global communication model. In the local communication model robots can communicate with each other only when they reside on the same node whereas in the global communication model robots can communicate with other robots residing on different nodes although they are not aware of the graph structure. The global communication model seems better than the local one though, but it is not so easy to find the exact location of a particular node just by communicating with the node. So just like the local communication model, robots have to explore through edges in the global communication model. There is a huge existing literature on dispersion problem considering local communication as well as global communication \cite{AAMKS2018ICDCN,AM2018ICDCN,CKMS2023CALDAM,DBS2021CALDAM,GKM2024CALDAM,GMMP2024JPDC,KA2019ICDCN,KM2023NETYS,KMS2019ALGO,KMS2020ICDCN,KMS2020ICDCS,KMS2020WALCOM,KMS2022JPDC,KS2021OPODIS,MM2019TAMC,MMM2020ALGOSENSORS,MMM2021IPDPS,MMM2021TCS,PSM2021ICDCN,SSKM2020SSS}.
The problem considered here is a special type of dispersion problem, called \textit{Distance-$k$-Dispersion} aka D-$k$-D where the robots disperse themselves autonomously in such a way so that the distance between any two occupied nodes is at least $k$ and there is at least one pair of occupied nodes whose distance is exactly $k$, where occupied nodes contain exactly one robot. For $k=1$, the problem is equivalent to the dispersion problem. Also, Distance-$2$-Dispersion (introduced in \cite{KM2023NETYS}) problem is a special case of D-$k$-D where $k=2$. Another intriguing aspect of D-$k$-D is its application in maximizing the sensing area with the use of the fewest agents possible assuming the sensing distance of an agent is up to $k$ hop.

In this paper, we have studied the D-$k$-D problem on a specific graph topology which is a 1-interval connected ring having $n$ nodes, assuming the initial configuration is a rooted configuration with $l \le \lfloor \frac{n}{k}  \rfloor$ mobile robots.

\section{Related Works and Our Contribution}
\label{Section:2}
\subsection{Related work}
Dispersion of mobile robots is a hugely studied problem in swarm robotics. This problem was first introduced by Augustine and Moses Jr. \cite{AM2018ICDCN} considering different types of graphs like path, ring, tree, rooted graph, and arbitrary graph, where they analyze the trade-off between time and memory with local communication model. They provided the lower bound of memory required for robots ($\Omega(\log n )$) and the minimum run time ($\Omega(D)$), where $n$ is the number of nodes of the graph and $D$ is the diameter of the graph. For the rooted graph and arbitrary graph, the provided algorithms solve within $O(m)$ rounds with memory per robot $O(\log n)$ and $O(n \log n)$ respectively, where $m$ is the number of edges. 

There are many works of dispersion on arbitrary graph considering local communication model in synchronous scheduler \cite{AM2018ICDCN,KA2019ICDCN,KMS2019ALGO,SSKM2020SSS,KS2021OPODIS}. In \cite{KA2019ICDCN} Kshemkalyani and Ali proposed five algorithms to solve dispersion on arbitrary anonymous graph considering the local communication model for synchronous and asynchronous model. The first three algorithms require runtime $O(m)$ running time with $O(l \log \Delta)$ bits memory at each robot, where $m$, $l$ and $\Delta$ are the number of edges, number of robots and degree of the graph respectively. The last two algorithms consider the asynchronous scheduler, one solves dispersion $O(D \log \Delta)$ bits memory per robot with run time $O(\Delta ^ D)$ and another solves with time complexity $O((m-n)l)$ considering $O(\max(\log l,\log \Delta))$ memory per robot. Later in \cite{KMS2019ALGO} Kshemkalyani et al. improved the algorithm by improving time complexity to $O(\min (m,l \Delta) .\log l)$ with $O(\log n)$ bits memory for each robot. Then in \cite{SSKM2020SSS}, Shintaku et al. considered dispersion with same memory $O(\log (l+ \Delta))$ and time complexity as \cite{KMS2019ALGO} without prior knowledge of $m,l,\Delta$. In \cite{KS2021OPODIS}, Kshemkalyani et al. provided an improved algorithm considering the same assumptions with $O(\min (m,l \Delta))$ runtime.  

The previous works give deterministic solutions. Using randomized solutions some works reduced the memory of the robots. In the work \cite{RSSS2019SPAA}, the authors studied the time to achieve dispersion using random walks on different graphs. In \cite{MM2019TAMC}, Molla et al. showed the technique to break $\Omega(\log n)$ memory lower bound used for deterministic solution. They provided two solutions with $O(\Delta)$ and $O(\max(\log \Delta, \log D))$ memory per robot for the rooted graph. Later in \cite{DBS2021CALDAM} Das et al. provided an improved solution from \cite{MM2019TAMC} considering $O(\log \Delta)$ bits of memory per robot which is optimal. 

Recently dispersion problem has been studied for the configuration with faulty robots. In this type, first Molla et al. studied the dispersion problem in \cite{MMM2020ALGOSENSORS,MMM2021TCS} on the anonymous ring with weak byzantine fault. Later in \cite{MMM2021IPDPS} authors provided different dispersion algorithms on graphs considering strong byzantine robots. Later in \cite{PSM2021ICDCN} Pattanayak et al. considered dispersion with some crash-prone robots from a rooted configuration. Recently in \cite{CKMS2023CALDAM} Chand et al. improved the time complexity without changing optimal memory requirement considering both rooted and arbitrary configuration.

In the study of dispersion problem global communication model has been considered in many tasks \cite{KMS2020ICDCN,KMS2020WALCOM,KMS2022JPDC}. In \cite{KMS2020ICDCN,KMS2022JPDC} Khemkalyani et al. studied dispersion with memory-time trade-off with global communication. Using global communication the better trade-off is possible than local communication. Again they considered the dispersion problem on the grid network in \cite{KMS2020WALCOM} using global communication model. They looked specially at square grids and came up with an algorithm that gives good trade-offs. Global communication can be exploited along with some other powers to solve dispersion in a dynamic graph.  In \cite{AAMKS2018ICDCN} Agarwalla et al. studied dispersion on dynamic ring subject to 1-interval connectivity and vertex permutation dynamism. In \cite{KMS2020ICDCS} Khsemkalyani et al. studied dispersion on dynamic graphs utilizing the knowledge of a node's neighborhood along with global communication considering 1-interval dynamism, even in the presence of crash faults.

Recently in \cite{KM2023NETYS} Kaur et al. considered a different kind of dispersion problem called Distance-2-Dispersion or D-2-D, where robots arbitrarily placed on a graph need to achieve a configuration such that settled robots occupy no two adjacent nodes. Moreover, an unsettled robot can settle on a node where already a settled robot exists if there is no vacant node fulfilling the extra constraints. Here dispersion is obtained using $l \ge 1 $ robots in $O(m \Delta)$ synchronous rounds with $O(\log n)$ memory per robot. Later in \cite{GKM2024CALDAM}, using a strong team of $l>n$ robots they improved the runtime to $O(m)$ from a rooted configuration, where $n,m,\Delta$ be the number of nodes, number of edges and maximum degree respectively. For arbitrary configuration, they provided an algorithm that solves D-2-D in $O(pm)$ rounds with $O(\log n)$ memory per robot, where $p$ is the number of nodes containing robots in the initial configuration.

\subsection{Our Contribution}

In this work, we aim to solve a different variant of the dispersion problem called Distance-$k$-Dispersion (D-$k$-D) on a dynamic ring. In the general dispersion problem, the goal is to reposition $l$ robots on the nodes of an $n$-node graph so that each node contains at most $\lfloor \frac{l}{n}\rfloor$ robots. For $l \le n$, each node of the graph contains at most one robot. But in the Distance-$k$-Dispersion problem, robots reposition themselves autonomously to achieve a configuration where the shortest distance between any two pair of robots is at least $k$ and there exists at least one pair of robots whose distance is exactly $k$. Note that, for $k=1$, the problem is similar to a normal dispersion problem, and for $k=2$, this problem is the same as Distance-2-Dispersion (introduced by \cite{KM2023NETYS}). So to the best of our knowledge, we consider Distance-$k$-Dispersion for the first time here on dynamic ring. Here we have assumed 1-interval connectivity dynamism where the adversary can omit at most one edge in a certain round from the ring without hampering the connectivity. 

Each robot has a unique identifier and initially, they all are on the same node i.e. the initial configuration is a rooted configuration. Robots have weak global multiplicity detection and strong local multiplicity detection. Robots work under a fully synchronous (\textsc{FSync}) scheduler. We have shown that a solution of D-$k$-D is not possible on a 1-interval connected dynamic ring for a semi-synchronous (\textsc{SSync}) scheduler if the initial configuration has a multiplicity node. Also, a time lower bound ($\Omega(n)$ rounds) is given for the D-$k$-D algorithm. Then we have provided a deterministic and distributed algorithm \textsc{D-$k$-D DynamicRing}, which solves the Distance-$k$-Dispersion problem for rooted initial configuration by the robots with chirality (i.e. the particular notion of orientation: clockwise or counter-clockwise). The algorithm \textsc{D-$k$-D DynamicRing} will terminate in $O(n)$ rounds. Observe that, if the robots have no chirality, then there are some known techniques in \cite{AAMKS2018ICDCN}, applying that chirality can be achieved so that the robots can agree on a particular direction (clockwise or counterclockwise). Then one can easily apply the existing algorithm \textsc{D-$k$-D DynamicRing} to achieve Distance-$k$-Dispersion even considering the set of robots without chirality.
We can also say that the algorithm \textsc{D-$k$-D DynamicRing} is time optimal as its runtime matches with the lower bound provided.

\section{Model and Problem 
Definition}\label{Section:3}
let $\mathcal{R}$ be a ring with $n$ consecutive nodes $v_0$,$v_1$,$v_2$,$\dots$,$v_{n-1}$ in a certain direction (either clockwise or counter clockwise direction), where the nodes $v_{i}$ is connected to both $v_{i-1 \pmod{n}}$ and $v_{i+1 \pmod{n}}$ by edges. These nodes are unlabeled.

\subsection{Robot Model:} 
let $\{r_1,r_2,\dots,r_l\}$ be a set of $l$ robots reside on the nodes of ring $\mathcal{R}$. The robots are identical i.e. physically indistinguishable and homogeneous i.e. the robots execute the same algorithm. Robots can reside only on the node but not on the path connecting two nodes. Each robot has a unique label (ID) which is a string of constant length. Robots have $\log n$ bits of memory for storing their Id, where $n$ is the number of nodes of the graph. But this memory is not updatable. So robot can't store anything else persistently. One or more robots can occupy the same node. If two or more robots are on the same node they are called co-located. The robots can communicate with each other only when they are co-located. In each round, co-located robots in a certain node can detect the minimum ID robot of that node.
\subsubsection{Visibility and communication}
Robots have the ability of strong local multiplicity detection which means the robot can detect how many robots are present in its node. Robots have weak global multiplicity detection i.e. robots can detect if there exists no robot, one robot, or more than one robot in a particular node within its visibility range. The visibility of a robot is defined as the number of hops the robot can see another node which is denoted as $\phi$. Here $\phi\ge n/2$ i.e. robots can see all the nodes of the dynamic ring. Robots can move one node to another node if the edge between them is not missing.
\subsubsection{Chirality}
The particular notion of orientation (clockwise or counterclockwise) of the robot in the ring is called chirality. Here clockwise (CW) and counterclockwise (CCW) notion of direction is used in the usual sense. 
\subsection{Scheduler}
The activation of the robot controlled by an entity is called scheduler. Here robots work under fully synchronous (\textsc{Fsync}) scheduler. In each round adversary changes in dynamism of the ring initially. In \textsc{Fsync} scheduler all robots get activated at the same time and perform look-compute-move synchronously.\\
\textbf{Look:} Robots take a snap of their views to collect information and communicate with the robots of their own node.\\ 
\textbf{Compute:} Robots perform computations on the basis of their look phase and determine whether to move or not and in which direction.\\
\textbf{Move:} The robots that are decided to move in compute phase, move to their determined direction. Robots can move in their determined direction only if the edge in that direction is not missing.

\subsection{Dynamic Ring Structure}
In this work, we have assumed the 1-interval connectivity dynamism. In $1$-interval connectivity (from \cite{LDFS2016ICDCS}) adversary can omit at most one edge from the ring in a certain round without losing the connectivity of the ring. 

A dynamic ring with $n$ nodes is considered here. Each node is anonymous i.e. they have no specific identifiers. Nodes have no memory. Initially, all the robots are on the same node, which is called \textbf{root node}. The ring is always connected. There are two types of nodes: \textbf{occupied node} and \textbf{null node}. If a node contains a robot then it is called \textbf{occupied node}. If there is no robot on the node then the node is called \textbf{null node}. According to the numbers of robots present on the node \textbf{occupied node} is of two types: \textbf{multiplicity node}, \textbf{singleton node}. \textbf{Multiplicity node} contains multiple number of robots and \textbf{singleton node} contains only one robot. The robot on a singleton node can be called \textbf{singleton robot}.

\subsection{Problem Definition}
Let $l$ robots reside on an $n$ node dynamic ring. Initially, the robots are co-located at a particular node. Without loss of generality let us assume that node is $v_0$. In the dispersion problem, the robots reposition themselves autonomously in such a way that at each node there is at most one robot. In \textit{distance-$k$-dispersion} (D-$k$-D) problem, the robots reposition to achieve a configuration that satisfies the following: \\
(i) The occupied node contains exactly one robot.\\
(ii) The distance between any two occupied nodes is at least $k$.\\
(iii) There exists at least one pair of nodes whose distance is exactly $k$.

\section{Impossibility Result and Lower Bound}\label{Section:4}
\begin{theorem}
    Distance-$k$-Dispersion on the dynamic ring is not possible for a semi-synchronous scheduler if the initial configuration has a multiplicity node on a 1-interval connected ring.  
\end{theorem}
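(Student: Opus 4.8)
The plan is to argue by contradiction: assume an algorithm $\mathcal{A}$ solves D-$k$-D under an \textsc{SSync} scheduler starting from the rooted configuration, where all $l\ge 2$ robots sit on the single multiplicity node $v_0$. The two facts I would lean on are (i) the robots are \emph{oblivious} --- the only persistent information is the read-only ID, so in every round a robot's computed action is a deterministic function of its ID and the \emph{current} snapshot of node occupancies alone; and (ii) by the Look--Compute--Move semantics a robot fixes a direction during Compute and is merely \emph{blocked} during Move if the incident edge in that direction happens to be missing, so it cannot re-plan within the round. Because condition (i) of D-$k$-D forbids any multiplicity node in the target configuration, $\mathcal{A}$ must in particular eventually empty $v_0$; I will show the adversary can prevent even a single robot from ever leaving $v_0$.

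For the adversary's strategy, note that from the fixed all-on-$v_0$ snapshot each robot deterministically decides to stay, to attempt a clockwise step, or to attempt a counter-clockwise step; this partitions the robots into three ID-determined sets $S_{\mathrm{stay}}$, $S_{\mathrm{cw}}$, $S_{\mathrm{ccw}}$. The adversary then alternates two kinds of rounds: in a ``CW round'' it activates $S_{\mathrm{cw}}\cup S_{\mathrm{stay}}$ and removes the clockwise edge incident to $v_0$; in a ``CCW round'' it activates $S_{\mathrm{ccw}}\cup S_{\mathrm{stay}}$ and removes the counter-clockwise edge. In either round the only activated robots that want to move are blocked by the single missing edge, while the robots wanting the \emph{open} direction are simply not activated, so no robot crosses an edge and the occupancy configuration is unchanged. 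Crucially, removing one edge keeps the ring connected, so this respects 1-interval connectivity; and since the configuration never changes, obliviousness guarantees the three sets are recomputed identically every round, making the schedule well defined and repeatable forever. Alternating the two round types activates every robot infinitely often, so the schedule is fair. This is exactly where \textsc{SSync} is essential: under \textsc{FSync} the adversary would be forced to activate $S_{\mathrm{cw}}$ and $S_{\mathrm{ccw}}$ \emph{simultaneously}, and with only one removable edge the robots wanting the open direction would escape $v_0$ --- consistent with the paper's positive \textsc{FSync} result.

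Consequently $v_0$ remains a multiplicity node in every round, so the configuration never satisfies condition (i), and $\mathcal{A}$ can neither terminate in a valid configuration nor converge to one --- contradicting that it solves D-$k$-D. The step I expect to be the main obstacle is establishing the ``no robot ever leaves $v_0$'' invariant rigorously: I must argue that a robot cannot use the identity of the currently missing edge to redirect its move within the same round (this is why the commit-in-Compute, block-in-Move semantics together with obliviousness are doing real work), and I must verify that the two directional groups can indeed be activated in separate rounds under \textsc{SSync} while keeping activation fair --- precisely the freedom that \textsc{FSync} denies the adversary.
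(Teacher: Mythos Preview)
Your argument is sound and takes a somewhat different route from the paper's. The paper's proof is simpler: rather than partitioning the robots into $S_{\mathrm{stay}}, S_{\mathrm{cw}}, S_{\mathrm{ccw}}$ and alternating group activations, the adversary simply activates \emph{at most one} robot from the multiplicity per round and removes whichever incident edge that single robot intends to cross. With only one robot awake, a single deleted edge always suffices to block it, so the multiplicity persists forever. What your more elaborate scheme buys is an explicit fairness guarantee --- every robot is activated infinitely often via the alternation --- which the paper's proof does not mention at all.

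On the obstacle you flagged: be aware that in this paper's model the robots \emph{do} see the missing edge during Look (the positive algorithm \textsc{Spread} explicitly branches on which chain is ``good'' versus ``bad''). Hence the ``fixed all-on-$v_0$ snapshot'' is not actually fixed across your two round types, and a robot whose rule is ``step toward whichever edge is present'' would sit in $S_{\mathrm{ccw}}$ on CW-rounds and in $S_{\mathrm{cw}}$ on CCW-rounds --- so your alternation would never activate it. The paper's one-at-a-time proof dodges this by wording the adversary as \emph{reacting} to the lone robot's computed direction, and it never invokes fairness; read strictly against the stated order (adversary removes an edge \emph{before} Look), both arguments share this same wrinkle.
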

\begin{proof}
If possible let, $\mathcal{A}$ be an algorithm that solves the Distance-$k$-Dispersion problem for semi-synchronous (\textsc{Ssync}) scheduler when the initial configuration has a multiplicity node. In semi-synchronous scheduler, time is also divided into rounds of equal duration like fully synchronous (\textsc{Fsync}) scheduler but a subset of robots get activated at the beginning of the round and execute look-compute-move cycle. The adversary has the power to choose if a certain robot is activated or remains idle at a certain round. Also, the adversary can omit at most one edge in a certain round from the ring without hampering the connectivity of the ring structure. 

Let from the beginning of execution of $\mathcal{A}$, adversary actives at most one robot from a multiplicity node, say $v$, in each round. 
 Let at a round $t$, a robot $r$ is activated on $v$. If according to $\mathcal{A}$, $r$ wants to move along the edge, say $e$ (incident to $v$) at round $t$ , adversary removes $e$ at round $t$. So $r$ can not move out from the multiplicity node. This is true for any $t \ge 0$ and any robot $r$ on $v$. So, no robot moves out from the multiplicity. So, Distance-$k$-Dispersion is impossible to solve on a 1-interval connected ring.  
\end{proof}

\begin{theorem} Any dispersion algorithm on a ring with $n$ nodes from a rooted configuration takes $\Omega(n)$ rounds.
\end{theorem}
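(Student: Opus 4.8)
The plan is to prove the statement by exhibiting a single rooted instance on which no algorithm can finish early, and the cleanest instance is the maximal one: $l=\lfloor n/k\rfloor$ robots, all initially placed on the root $v_0$. The whole argument rests on one elementary observation: since every robot starts at $v_0$, its ring-distance to $v_0$ equals $0$ at round $0$, and one look-compute-move round changes a robot's position by at most one edge, hence changes this distance by at most $1$. Therefore, if some robot must occupy a node at ring-distance $D$ from $v_0$ in the final configuration, that robot (and hence the algorithm) needs at least $D$ rounds. So I first reduce the theorem to a purely combinatorial claim: \emph{every legal dispersed configuration contains an occupied node at ring-distance $\Omega(n)$ from $v_0$.}

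To establish that claim I would analyze the terminal configuration geometrically. The $l=\lfloor n/k\rfloor$ occupied nodes are pairwise at distance at least $k$, so reading them in cyclic order gives $l$ gaps that sum to $n$ and are each at least $k$. Because $l=\lfloor n/k\rfloor$ forces $(l+1)k>n$, one gets $(l-1)k>n-2k$, so the largest gap is strictly less than $2k$; the occupied nodes thus tile the whole ring with no large empty arc. Consequently the arc of length $2k$ centred at the point antipodal to $v_0$ must contain an occupied node, which is then within $k$ of that antipode and hence at ring-distance at least $n/2-k$ from $v_0$. This alone is $\Omega(n)$ whenever $k$ is a small fraction of $n$; to cover the remaining range (large $k$, i.e.\ small $l$) I would invoke the triangle inequality on any pair of occupied nodes $p,q$: from $k\le d(p,q)\le d(v_0,p)+d(v_0,q)$ one gets $\max\{d(v_0,p),d(v_0,q)\}\ge k/2$. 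Taking the better of the two bounds, the farthest occupied node sits at distance at least $\max\{\,n/2-k,\;k/2\,\}\ge n/6$, which is $\Omega(n)$ for every admissible $k$.

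Combining the two pieces finishes the proof: the robot occupying that farthest node travelled net ring-distance $\Omega(n)$ from $v_0$, so at least $\Omega(n)$ rounds elapsed, and this already holds on the static ring; the $1$-interval adversary can only delete edges and thereby delay robots further, so the bound transfers verbatim to the dynamic setting. The step I expect to be the crux is the spreading/counting claim that a legal terminal configuration cannot bunch near the root — everything hinges on combining the mandatory spacing $k$ with the maximality of $l$ so that the occupied set is forced to wrap around the ring and place a node near the antipode. A minor subtlety to handle carefully is that ``farthest'' must be measured in the ring metric $\min(p,n-p)$, so a robot cannot escape the bound by choosing the shorter arc; since the witnessing node lies near the antipode, both arcs to it are long, and the displacement argument goes through.
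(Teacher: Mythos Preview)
Your argument is correct, but you have solved a harder problem than the theorem actually poses. The statement is about ordinary dispersion (i.e.\ the $k=1$ case), not D-$k$-D; the paper's proof simply takes the instance $l=n$, observes that in a dispersed configuration every node---in particular the antipode of the root at distance $\lfloor n/2\rfloor$---must carry a robot, and concludes that $\lfloor n/2\rfloor$ rounds are unavoidable. That is the whole proof.

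What you wrote instead establishes the $\Omega(n)$ lower bound directly for D-$k$-D with $l=\lfloor n/k\rfloor$ robots, via a packing argument (max gap $<2k$, hence an occupied node near the antipode) together with the triangle-inequality fallback for large $k$. This is a genuinely different route: it is longer and more delicate, but it yields the paper's Corollary~\ref{th:f'} in one shot rather than deriving it from the $k=1$ theorem. Setting $k=1$ in your argument recovers the paper's bound (your $n/2-k$ term becomes $n/2-1$), so nothing is lost. The only caveat is that your secondary bound $k/2$ requires $l\ge 2$, i.e.\ $k\le n/2$; this is automatic since the D-$k$-D target configuration demands at least one pair of robots at distance exactly $k$, but you should say so explicitly.
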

\begin{proof}
Let $\mathcal{A}$ be a dispersion algorithm that terminates for any rooted initial configuration with $l=n$ robots. Let $r$ be the robot that moves furthest to a node, say $v_t$ from the root, say $v_0$, in the target configuration. Then in the worst case distance between $v_0$ and $v_t$ is $\frac{n}{2}$. Since a robot can move only one unit at a round, at least $\frac{n}{2}$ rounds are needed. So, $\mathcal{A}$, takes $\Omega(n)$ rounds to terminate. 
\end{proof}

\begin{corollary}
    Any dispersion algorithm on a 1-interval connected ring with $n$ nodes, takes $\Omega(n)$ rounds to terminate.
\end{corollary}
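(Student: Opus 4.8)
The plan is to obtain the corollary as an immediate consequence of the preceding theorem via a trivial reduction: a static ring is nothing but a special instance of a $1$-interval connected dynamic ring in which the adversary chooses to delete no edge at all. Since deleting zero edges trivially preserves connectivity, such an adversarial behaviour is admissible under the $1$-interval connectivity model, and therefore every execution that can occur on a static ring is also a legitimate execution on the dynamic ring.

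Concretely, I would fix an arbitrary dispersion algorithm $\mathcal{A}$ that is guaranteed to terminate on any $1$-interval connected ring with $n$ nodes starting from a rooted configuration, and then single out the particular adversary that keeps all $n$ edges present in every round. Under this adversary the ring is effectively static, while the initial configuration remains rooted with (in the worst case) $l=n$ robots co-located at $v_0$. The earlier theorem now applies verbatim to this instance: the robot that must reach the farthest target node has to traverse a distance of up to $\frac{n}{2}$, and since a robot advances at most one hop per round, at least $\Omega(n)$ rounds must elapse before $\mathcal{A}$ can terminate.

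Because this ``no-deletion'' execution is one admissible run of $\mathcal{A}$ against an admissible adversary on the dynamic ring, the worst-case round complexity of $\mathcal{A}$ over all admissible adversaries is at least this value. Hence $\mathcal{A}$ requires $\Omega(n)$ rounds to terminate on the $1$-interval connected ring, which is exactly the assertion of the corollary.

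I do not anticipate a genuine obstacle here, since moving from the static setting to the dynamic one only grants the adversary additional power (the ability to remove an edge) on top of everything already available in the static case; a lower bound can only be reinforced, never weakened, by such an enhancement. The single point requiring care is to confirm that the ``no-deletion'' schedule indeed satisfies the $1$-interval connectivity constraint, which it does vacuously, so the static-ring run is formally a valid dynamic-ring run and the bound transfers directly.
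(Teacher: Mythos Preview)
Your proposal is correct and matches the paper's intent: the paper states the corollary without a separate proof, implicitly relying on exactly the observation you make, namely that the static ring is a special instance of the $1$-interval connected ring (the adversary deletes no edge), so the $\Omega(n)$ lower bound from the preceding theorem transfers directly.
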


\begin{corollary}\label{th:f'}
    Any D-$k$-D algorithm on a 1-interval connected ring with $n$ nodes, takes $\Omega(n)$ rounds to terminate.
\end{corollary}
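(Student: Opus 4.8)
The plan is to reuse the furthest-travel idea behind the dispersion lower bound, but I first have to verify that D-$k$-D genuinely forces some robot far from the root $v_0$. Since a lower bound is existential, it suffices to exhibit one bad rooted instance, so I would fix the worst admissible case $l=\lfloor n/k\rfloor$ (the maximum number of robots). The claim then reduces to a single geometric fact: on this instance, \emph{every} legal target configuration contains an occupied node at ring-distance $\Omega(n)$ from $v_0$. Granting that fact, the timing conclusion follows exactly as in the preceding theorem, since all robots start co-located at $v_0$ and traverse at most one edge per round.

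The heart of the argument is a packing (spreading) lemma. Suppose, for contradiction, that every occupied node of a valid D-$k$-D configuration lay within ring-distance $R$ of $v_0$; then all $l$ occupied nodes would sit on a sub-path of $2R+1$ consecutive nodes. On a path of length $2R$, condition (ii) (pairwise distance at least $k$) allows at most $\lfloor 2R/k\rfloor + 1$ occupied sites, so fitting all $l$ robots requires $\lfloor 2R/k\rfloor + 1 \ge l$, i.e. $R \ge (l-1)k/2$. Substituting $l=\lfloor n/k\rfloor$ gives $R \ge (\lfloor n/k\rfloor-1)k/2 = \Omega(n)$ whenever $k$ is bounded away from $n$ (equivalently, for $k$ treated as a constant, which is the regime in which the $\Theta(n)$ algorithm is claimed optimal). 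In the degenerate remaining regime where $k$ is comparable to $n$ and hence $l$ is tiny, I would instead invoke condition (iii) directly: some pair of occupied nodes is at distance exactly $k=\Omega(n)$, so one of the two must have moved $\Omega(n)$ hops from $v_0$.

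To finish, I would observe that whichever robot ultimately occupies the far node $u^{\star}$ guaranteed above must have performed at least $R$ edge-traversals starting from $v_0$; as a robot moves at most one edge per activation, the algorithm cannot terminate before $R=\Omega(n)$ rounds. The $1$-interval adversary can only delay robots by deleting edges, so the bound can only increase on a dynamic ring, which is consistent with the dispersion lower bound on a $1$-interval ring established above; likewise the large visibility $\phi \ge n/2$ and global multiplicity detection are irrelevant, since information cannot substitute for physical traversal. I expect the packing lemma to be the only real obstacle: the delicate points are the floor bookkeeping that turns $\lfloor 2R/k\rfloor+1 \ge l$ into a clean $\Omega(n)$ bound on $R$, the wrap-around care needed to ensure $u^{\star}$ is far in \emph{ring}-distance rather than merely in one-directional distance, and the remark that it is \emph{some} robot, not a predetermined one, that must reach $u^{\star}$. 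Once an occupied node at distance $\Omega(n)$ is secured, the timing statement is immediate.
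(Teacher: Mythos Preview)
Your proposal is correct and follows the same furthest-robot idea as the paper's Theorem~2; the paper states Corollary~\ref{th:f'} without proof, so your packing argument (showing that with $l=\lfloor n/k\rfloor$ robots some occupied node must lie at ring-distance $\Omega(n)$ from the root) is exactly the detail needed to adapt the dispersion lower bound to the D-$k$-D setting. The case split between ``$k$ small'' (packing) and ``$k$ comparable to $n$'' (invoke condition~(iii)) is a natural refinement the paper leaves implicit.
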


\subsection{Road map to the Paper}
A brief road map to the rest of the paper is described here. In Section~\ref{Section:3}, the problem definition along with the model is discussed. In Section~\ref{Section:4}, an impossibility result showing in which case the problem can't be solvable  is given and the asymptotic lower bound for the run time of the dispersion algorithm is also discussed. In Section~\ref{Section:5}, Some preliminaries such as some definitions, and notations have been established. Section~\ref{Section:6} is dedicated to describing the provided algorithm along with the correctness results. Finally, this work is concluded in Section~\ref{Section:7}.

\section{Definitions and Preliminaries}\label{Section:5}
In this section let us describe some useful definitions.

\begin{definition}[Configuration]
A configuration at a round $t$ can be denoted as $C(t):(V,t,f)$ where $f$ is a function from $V$ to $\{0,1,2\}$ such that
\begin{equation*}
f(v) = 
\begin{cases}
        0, & \text{when v is null node at round t}\\
        1, & \text{when v is single node at round t}\\
        2, & \text{when v is multiplicity node at round t}
\end{cases} 
\end{equation*}
    
\end{definition}

\begin{definition}[Arc]
 An induced sub-graph of the ring $\mathcal{R}$ containing the vertices starting from a node $v_x$ in the direction $\mathcal{D}$ ending at the node $v_y$ is called an arc and it is denoted as $(v_x,v_y)_{\mathcal{D}}$.   

 The number of nodes in an arc is called arc length.
\end{definition}

\begin{definition}[Distance between two nodes]
For any two nodes $v_x$ and $v_y$ on the ring $\mathcal{R}$, the distance between the nodes $v_x$ and $v_y$ in a direction $\mathcal{D}$ is the number of edges from the node $v_x$ to $v_y$ in direction $\mathcal{D}$ and it is denoted as $d_{\mathcal{D}}(v_x,v_y)$.
    
\end{definition}

\begin{definition}[Consecutive occupied nodes]
Let $v_x$ and $v_y$ be two occupied nodes on the ring $\mathcal{R}$. Then $v_x$ and $v_y$ are said to be consecutive occupied nodes (Fig~\ref{fig:occupied node}) if there exists a direction $\mathcal{D}$ (either clockwise or counterclockwise direction) such that $(v_x,v_y)_{\mathcal{D}}$ does not contain any other occupied node other than $v_x$ and $v_y$.
    
\end{definition}

\begin{figure}[h]
     \centering
     \includegraphics[height=3.3cm]{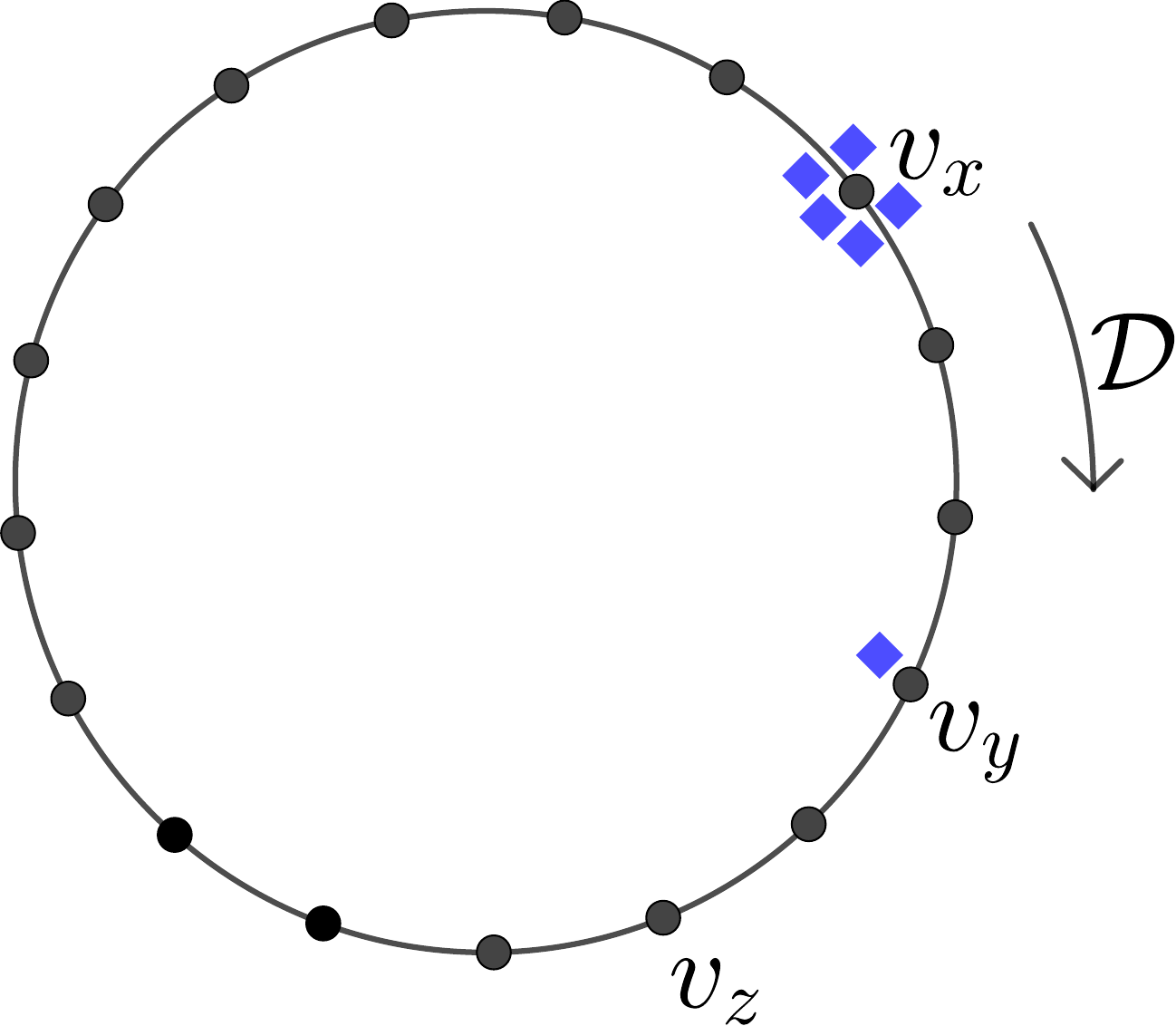}
     \caption{$v_x$ and $v_y$ are consecutive occupied nodes in the arc $(v_x,v_z)_{\mathcal{D}}$ of length $5$ in direction $\mathcal{D}$. Black dots represent the nodes on the ring and blue boxes represent the robots. }
     \label{fig:occupied node}
 \end{figure}

\begin{definition}[Clockwise Chain]
Let $\mathcal{R}$ be a ring with vertices $v_0,v_1,\dots,v_{n-1}$ in clockwise direction $\mathcal{D}$ (say). The arc starting from the node $v_{i-1 \pmod{n}}$ to $v_j$ i.e. $(v_{i-1 \pmod{n}},v_j)_{\mathcal{D}}$, for some $i,j$ is called clockwise chain (CW-chain) if

(i) $v_i$ is multiplicity node, $v_{j-1}$ is occupied node and $v_j$ is null node.

(ii) for any two consecutive occupied nodes $v_x$, $v_y$  $\in (v_{i-1 \pmod{n}},v_j)_{\mathcal{D}}$, $d_{\mathcal{D}}(v_x,v_y)=k$.

(iii) $d_{\mathcal{D}}(v_{j-1},v_p)>k$, for the nearest occupied node $v_p$ in clockwise direction from the node $v_{j-1}$.


similarly one can define counterclockwise chain (CCW-chain) considering counterclockwise direction.

\end{definition}

\begin{definition}[Chain Configuration]\label{defn:chain config}
Chain configuration is a configuration that contains both clockwise and counterclockwise chains (Fig~\ref{fig:chain config}).

\end{definition}

\begin{figure}[h]
     \centering
     \includegraphics[height=3.8cm]{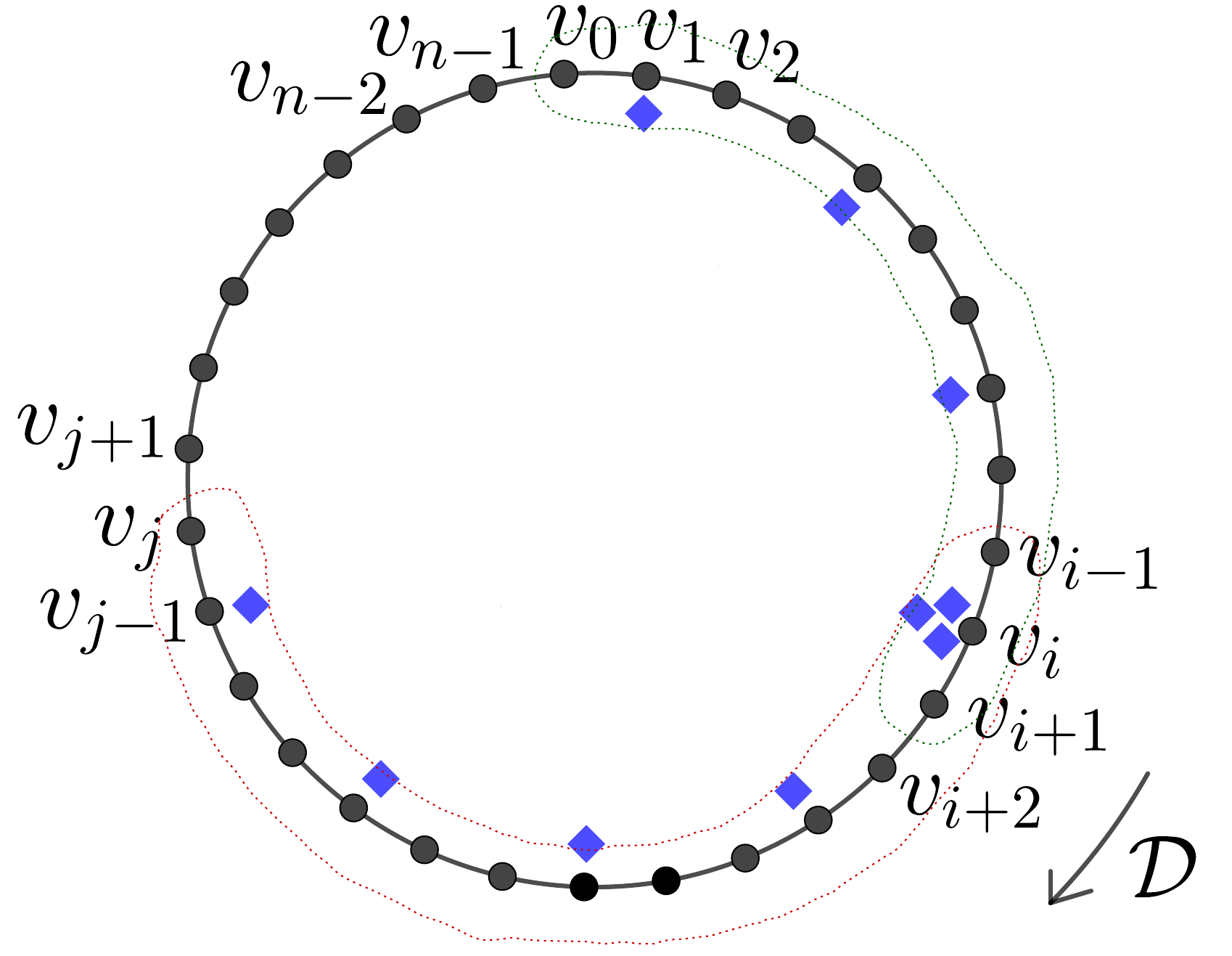}
     \caption{Chain configuration where $(v_{i-1},v_j)_{\mathcal{D}}$ is a chain in clockwise direction $\mathcal{D}$ and $(v_{i+1},v_0)_{\mathcal{D}'}$ is a chain in counter clockwise direction for $k=3$. }
     \label{fig:chain config}
 \end{figure}

\begin{figure}[!h]

\centering     
\subfigure[Both Chain good chain ]
{\label{fig:both good chain}
\includegraphics[height=3.2cm]{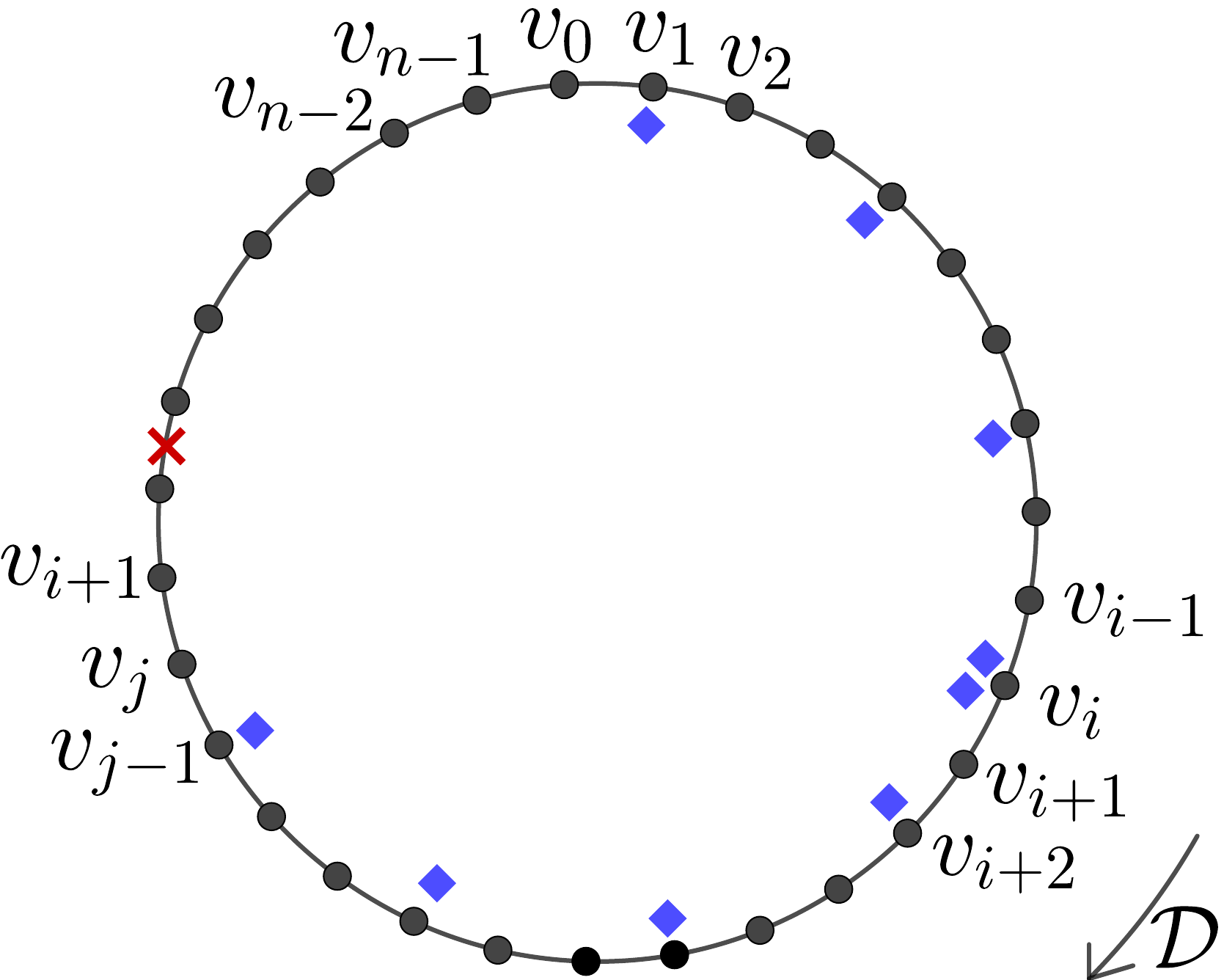}}
\hfill
\subfigure[one good and one bad chain]
{\label{fig:good chain bad chain}
\includegraphics[height=3.2cm]{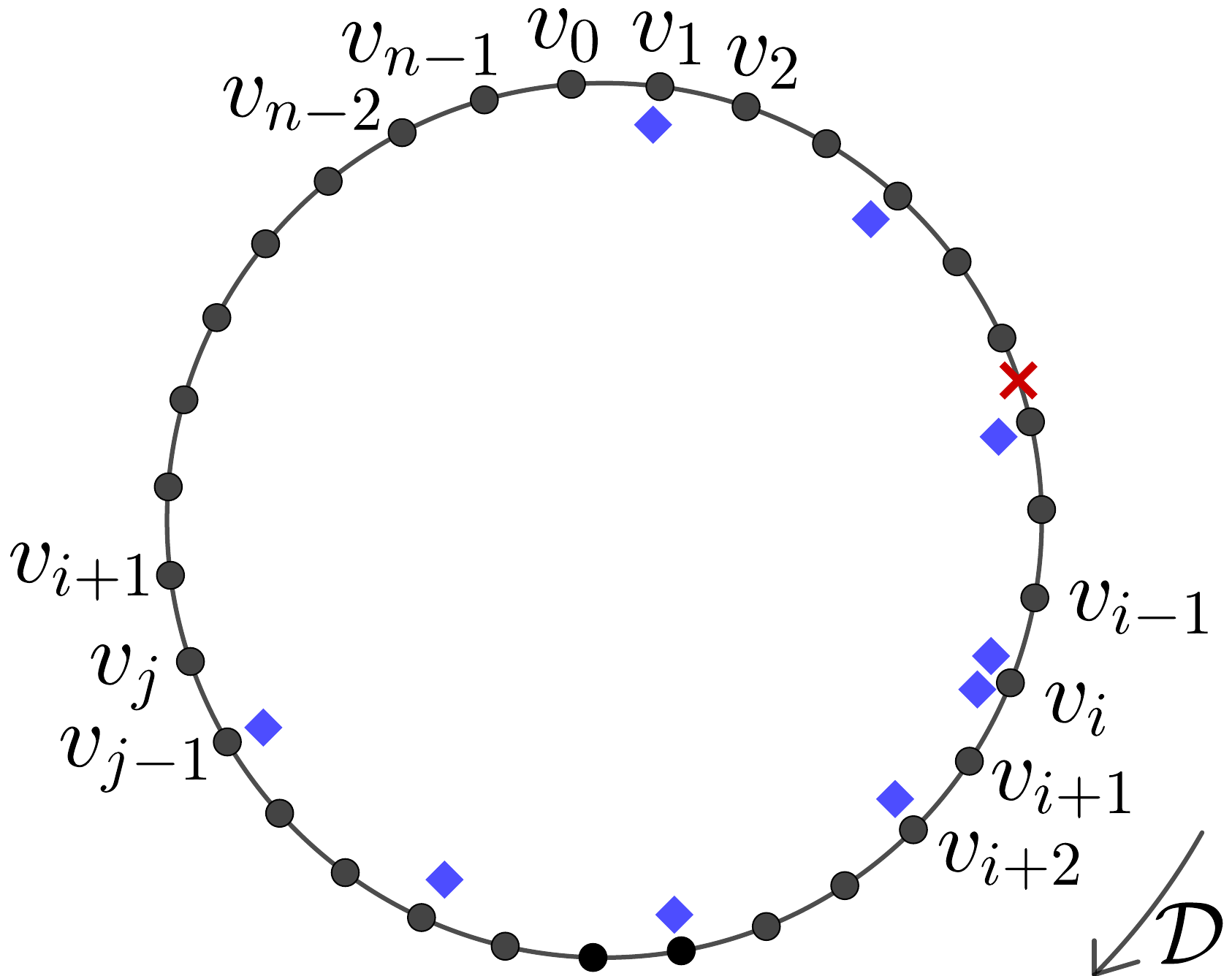}}
\hfill
\subfigure[Both chain bad chain]
{\label{fig:both bad chain}
\includegraphics[height=3.2cm]{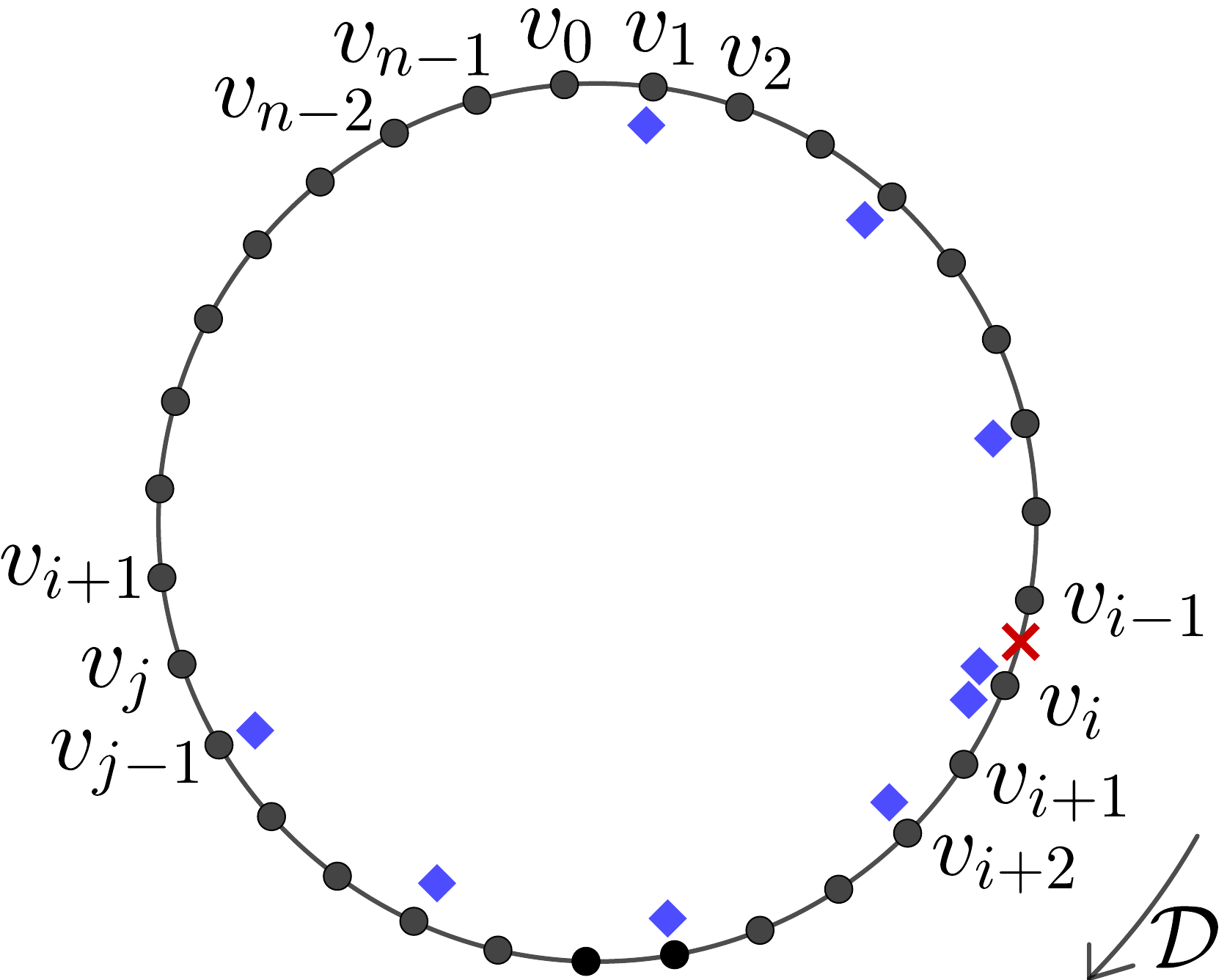}}
\caption{Different types of chain configuration for $k=3$}
\label{fig:diffchainconfig}
\end{figure}

\begin{definition}[Range of the Chain Configuration]
In a chain configuration, the maximum size arc of the ring $\mathcal{R}$ such that every node on that arc is contained in at least one chain is called the range of the chain configuration.  
\end{definition}

If the configuration is not chain configuration then it is called \textit{non-chain configuration}.

 In a chain configuration, a chain is called a \textit{good chain} if the chain contains no missing edge. Otherwise, it is called a \textit{bad chain}. There are three types of chain configuration: chain configuration containing two good chains, containing one good chain, and one bad chain, containing two bad chains.

\begin{definition}[Dispersed Configuration]
A configuration is called dispersed configuration if there is no multiplicity node.
\end{definition}

\begin{definition}[Head]
In a dispersed configuration, if there exists exactly one pair of occupied nodes $v_x$ and $v_y$ on the ring $\mathcal{R}$ such that
$d_{\mathcal{D}}(v_x,v_y)<k$ for clockwise direction $\mathcal{D}$, then $v_x$ is called the head and denoted as $H$.
\end{definition}

\begin{definition}[Block]\label{defn:block}
Let $\mathcal{R}$ be a ring with vertices $v_0,v_1,\dots,v_{n-1}$ in a direction $\mathcal{D}$. The arc starting from the node $v_{i}$ to $v_j$ i.e. $(v_i,v_j)_{\mathcal{D}}$, for some $i,j$ with maximum arc length is called a block in direction $\mathcal{D}$ if

(i) $v_i$ is a multiplicity node or head.

(ii) $v_j$ is a null node and $v_{j-1}$ is occupied node, where $v_{j-1}$ is the node adjacent to $v_j$ in the arc $(v_i,v_j)_{\mathcal{D}}$.

(iii)(a) $d_{\mathcal{D}}(v_x,v_y)=k$, for all consecutive occupied nodes $v_x$, $v_y\in (v_i,v_j)_{\mathcal{D}}$

or,

(b) $d_{\mathcal{D}}(v_i,v_z)<k$, where $v_i$ and $v_z$ are consecutive occupied nodes $\in (v_i,v_j)_{\mathcal{D}}$ and $d_{\mathcal{D}}(v_x,v_y)=k$, for all consecutive occupied nodes $v_x$, $v_y$ $\in (v_z,v_j)_{\mathcal{D}}$

(iv) $d_{\mathcal{D}}(v_{j-1},v_p)>k$, for the nearest occupied node $v_p$ in direction $\mathcal{D}$ from the node $v_{j-1}$.

The node $v_i$ is called block head. 
\end{definition}
For a block conditions (iii)(a) and (iii)(b)
 can not satisfy simultaneously. In a block if condition (iii)(a) is satisfied it is called a \textit{chain block} otherwise if condition (iii)(b) is satisfied it is called a \textit{non-chain block}.

\begin{figure}[h]

\centering     
\subfigure[Block configuration. ]
{\label{fig:block}
\includegraphics[height=3.2cm]{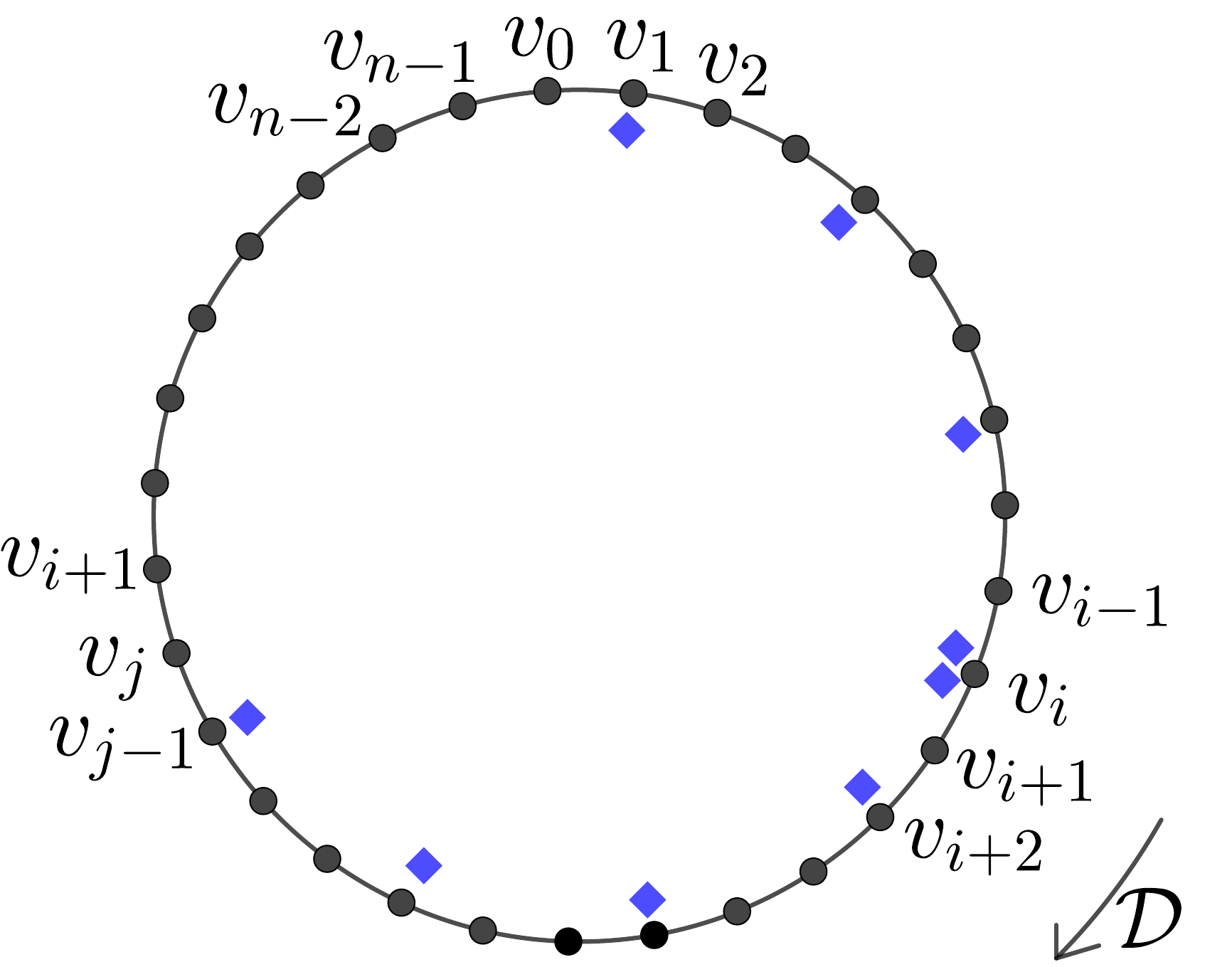}}
\hspace{0.1cm}
\subfigure[Dispersed configuration.]
{\label{fig:dispersed config}
\includegraphics[height=3.2cm]{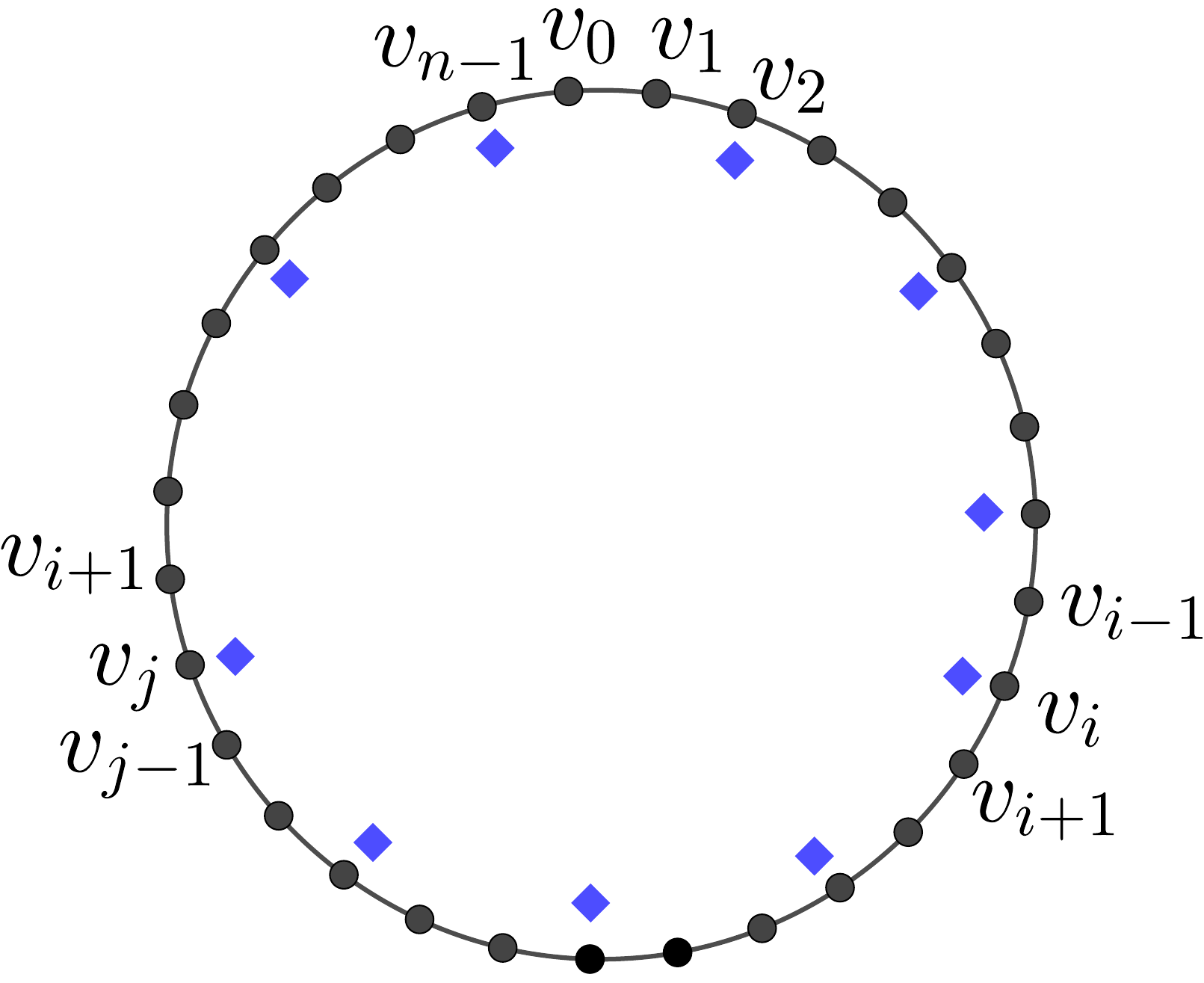}}
\caption{Different types of non-chain configuration for $k=3$}
\label{fig:diffnonchainconfig}
\end{figure}

\begin{definition}[Target Configuration]
A dispersed configuration is said to be target configuration ($C_T$) if,
$d_{\mathcal{D}}(v_x,v_y)\ge k$, for all consecutive occupied nodes $v_x$, $v_y$ on the ring $\mathcal{R}$, where $\mathcal{D}$ be either CW direction and CCW direction. 
\end{definition}

\begin{figure}[h]
     \centering
     \includegraphics[height=3.2cm]{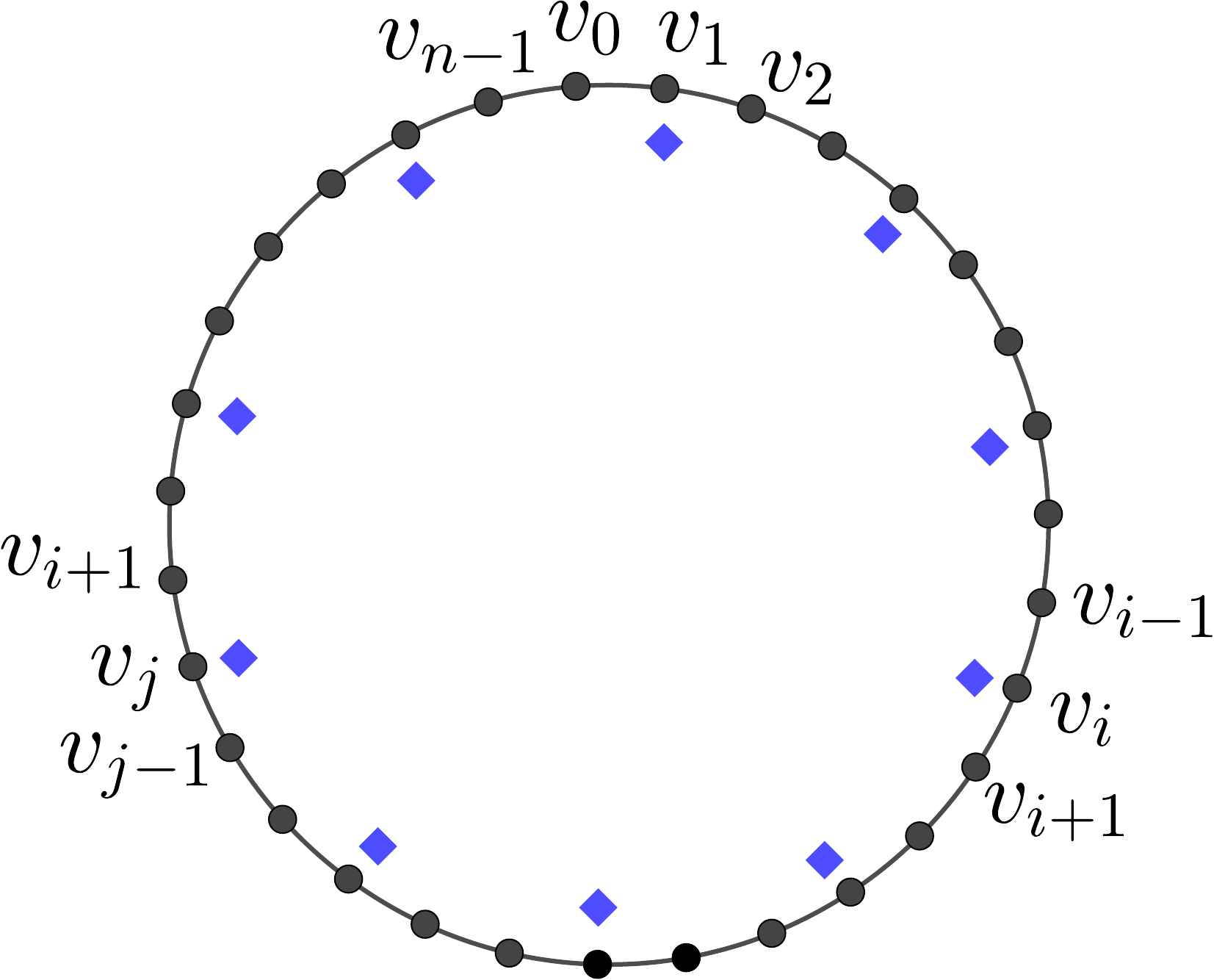}
     \caption{Target Configuration of D-$k$-D where $k=3$}
     \label{fig:target configuration}
 \end{figure}

\section{Algorithm for rooted configuration}\label{Section:6}
\subsection{Algorithm considering chirality}
\subsubsection{Description of the algorithm}
In this section, the main idea of D-$k$-D has been discussed. All the robots are initially placed on a particular node i.e. initially the configuration is rooted. The robots have chirality i.e. they agree on a particular direction (without loss of generality let the direction be clockwise direction).

The main task is divided into two sub-tasks. If the configuration is a chain configuration (Fig.~\ref{fig:chain config}), then a robot, say  $r$ on chain executes the algorithm \textsc{Spread($r$)} (Algorithm~\ref{Spread}). If the configuration is not a chain configuration (Fig.~\ref{fig:block}) then the robots reconstruct the chain configuration by executing \textsc{ReconstructChain($r$)} (Algorithm~\ref{ConstructChain}).

The main aim is to form the target configuration ($C_T$) (Fig.~\ref{fig:chain config}) where the distance between two consecutive occupied nodes is at least $k$. The brief idea of the algorithm \textsc{D-$k$-D DynamicRing($r$)}, proposed here is as follows. 
Note that initially, the rooted configuration is a chain configuration. The idea is to spread the range of the chain configuration by relocating one more agent from the multiplicity node to a null node (thus number of robots on multiplicity decreases). Now since the ring is 1-interval connected, to achieve a spread maintaining the chain configuration might not be possible. This ends up resulting in a non-chain configuration (Fig~\ref{fig:chain to nonchain}). In this scenario, the aim becomes reconstructing a chain configuration from the non-chain configuration while maintaining the number of robots on the multiplicity (Fig~\ref{fig:non-chain to chain}). So a combination of \textsc{Spread($r$)} followed by finite consecutive execution (at most $k-1$) of  \textsc{ReconstructChain($r$)}, the configuration spreads the range of the chains while reducing the number of robots on the multiplicity point by one. 
We call an execution of \textsc{Spread($r$)} followed by a finite consecutive (at most $k-1$) execution of \textsc{ReconstructChain($r$)} until a chain configuration is reached, an \textit{iteration}. Note that an iteration can be at most of $k$ rounds. So for $l$ robots within $l$ iteration, the configuration will be dispersed. Then from the disperse configuration within at most $k-1$ execution of \textsc{ReconstructChain($r$)} the configuration becomes $C_T$ (Fig.~\ref{fig:target configuration}).

\begin{figure}
\centering     
\subfigure[Chain configuration. ]{
\includegraphics[scale=0.21]{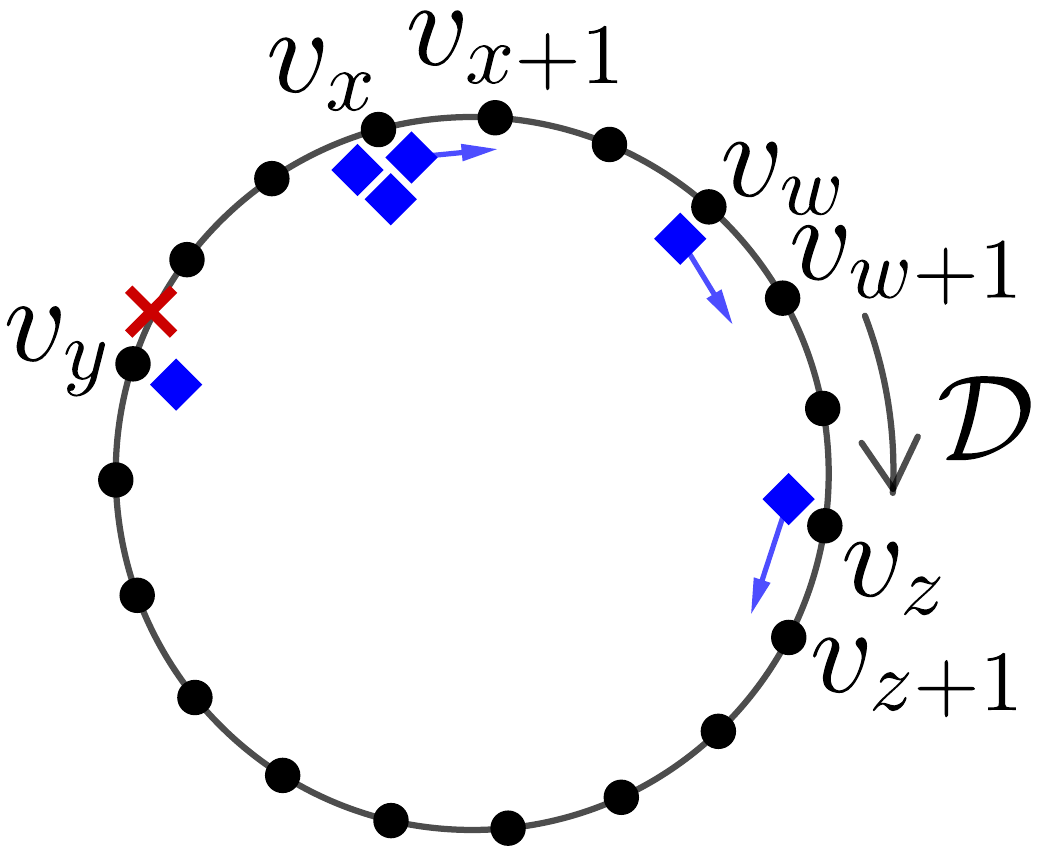}}
\hspace{0.1cm}
\subfigure[Non-chain configuration.]{
\includegraphics[scale=0.21]{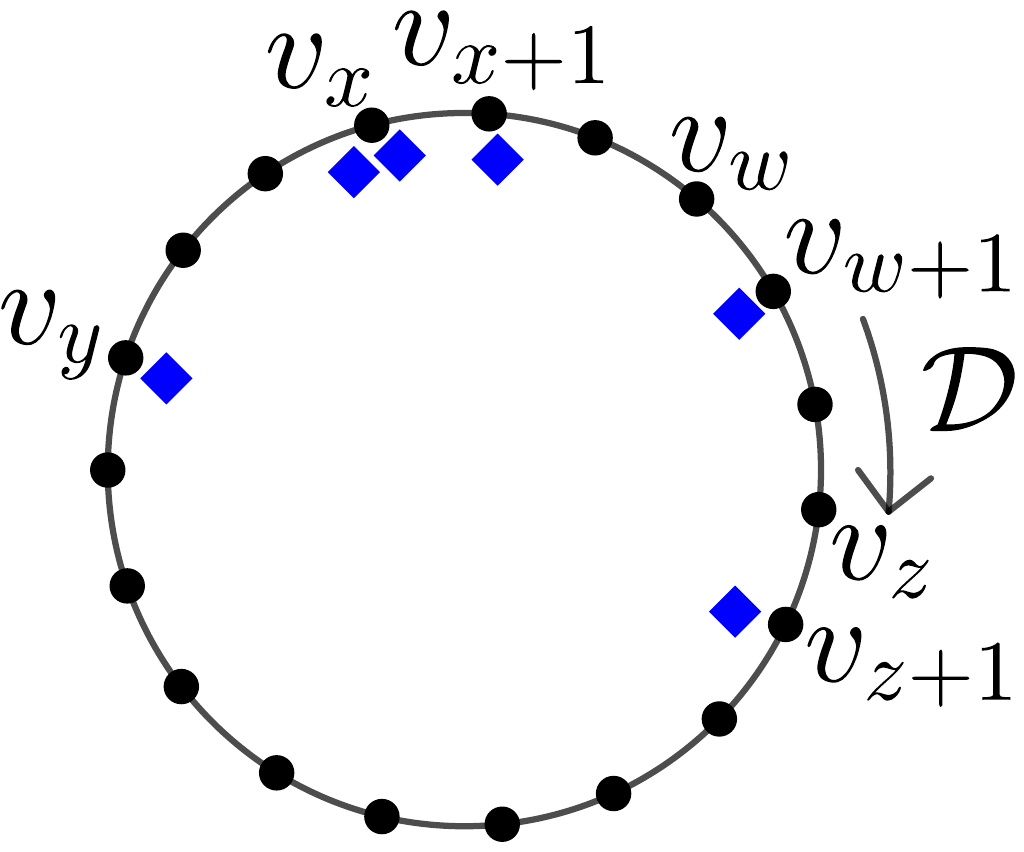}}
\caption{Chain to non-chain configuration}
\label{fig:chain to nonchain}
\end{figure}

\begin{figure}
\centering     
\subfigure[Non-chain configuration.]{
\includegraphics[scale=0.21]{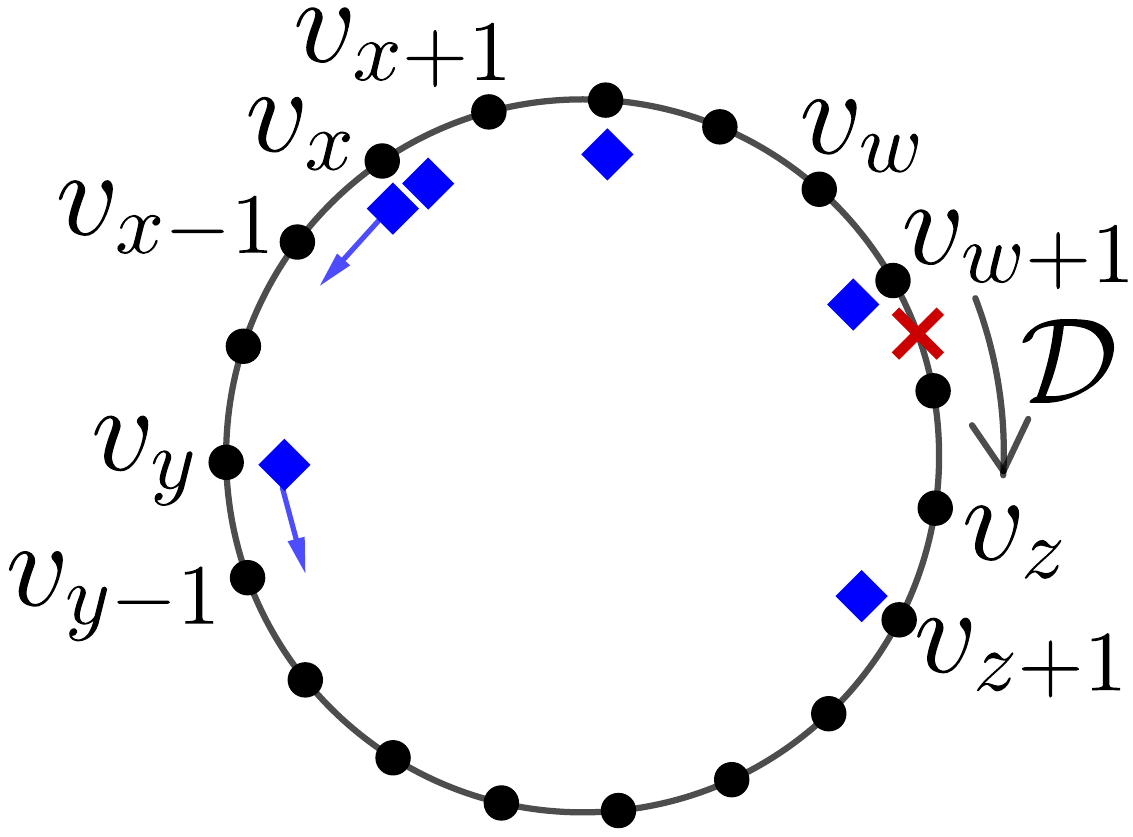}}
\hspace{0.1cm}
\subfigure[Chain configuration.]{
\includegraphics[scale=0.21]{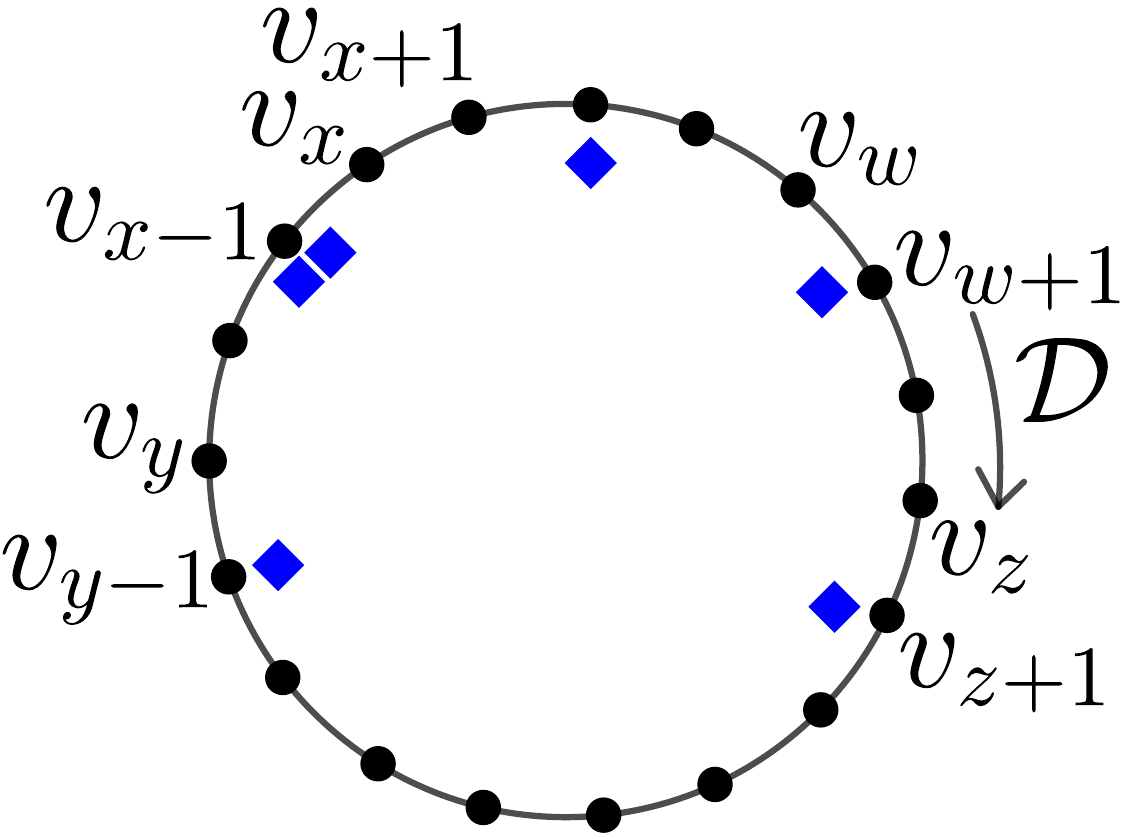}}
\caption{Non-chain to chain configuration}
\label{fig:non-chain to chain}
\end{figure}


\begin{algorithm}[ht]
\caption{\textsc{D-$k$-D DynamicRing($r$) }}
\label{DkD_DYNAMICRING}
\LinesNumbered

 \uIf{chain configuration}
    {Execute \textsc{Spread($r$)}\;}
 \Else{Execute \textsc{ReconstructChain($r$)}\;}   
 
\end{algorithm}

\begin{algorithm}[ht]
\caption{\textsc{Spread($r$)}}
\label{Spread}
\LinesNumbered
\eIf{$r$ is on multiplicity node}
{
    \If{$r$ has least ID among all nodes on the same node}
    {
            \uIf{both chain are good chain}
            {
                $\mathcal{D} \xleftarrow{}$ counter-clockwise direction\;
            }
            \uElseIf{exactly one chain is a bad chain}
            {
                 $\mathcal{D} \xleftarrow{}$  direction of the good chain\;
            }
            \ElseIf{both chains are bad chain}
            {
                 $\mathcal{D} \xleftarrow{}$  direction in which multiplicity node has no adjacent missing edge\;
            }

             $r$ moves in direction $\mathcal{D}$\;

    }
}
{
        \uIf{both chain are good chain}
            {
               \If{$r$ is on counter-clockwise chain}
               {
                 $r$ moves in counter-clockwise direction\;
               }
            }
            \uElseIf{exactly one chain is a bad chain}
            {
                 $\mathcal{D} \xleftarrow{}$  direction of the good chain\;
                 \If{$r$ is on good chain}
               {
                 $r$ moves in direction $\mathcal{D}$\;
               }
                 
            }
            \ElseIf{both chains are bad chain}
            {
                 $\mathcal{D} \xleftarrow{}$  direction in which multiplicity node has no adjacent missing edge\;
                 \If{$r$ is on the chain in direction $\mathcal{D}$}
                 {
                    $r$ moves in direction $\mathcal{D}$\;
                 }
            }
}
\end{algorithm}

\underline{\textbf{Subroutine} \textsc{Spread($r$)}\textbf{:}} During the execution of the algorithm at any round, the configuration can be chain configuration. In chain configuration two chains exist, one is CW-chain another is CCW-chain. Then three cases can arise: both are good chains, one chain is good another is bad, and both are bad chains (Fig~\ref{fig:diffchainconfig}).

Initially, all the robots are at same node which is called the root i.e. the initial configuration is rooted configuration. Note that this configuration is a chain configuration by Definition~\ref{defn:chain config}. 

If the configuration contains two good chains, then either there is no missing edge or the missing edge is not in any chain. In this case, the robots on singleton nodes of CCW chain move in counterclockwise direction. The robots on the multiplicity can find the least ID robot by exploiting local communication. The robot with the least ID on the multiplicity moves in counter clockwise direction.

If the configuration contains one good chain and one bad chain, then the missing edge is on the bad chain but not adjacent to the multiplicity node. Let $\mathcal{D}$ be the direction (either CW or CCW) from multiplicity in which the good chain exists. In this case, all singleton robots on the good chain and the least ID robot on multiplicity move in direction $\mathcal{D}$.

If the configuration contains two bad chains, then the missing edge is adjacent to multiplicity which is contained in both chains. This missing edge must be next to the multiplicity node either in clockwise direction or in counterclockwise direction. Let $\mathcal{D}$ be the direction in which multiplicity has no missing edge. In this case, the least ID robot on multiplicity moves in direction $\mathcal{D}$. The singleton robots of the chain in $\mathcal{D}$ direction move in the direction $\mathcal{D}$.
As a consequence of the subroutine \textsc{Spread($r$)}, we have the following observation:

\begin{observation}
    If the chain configuration contains one good chain and one bad chain or two bad chains then, after executing the subroutine \textsc{Spread($r$)} the chain configuration may get hampered and the configuration will become non-chain configuration.
\end{observation}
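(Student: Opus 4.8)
The plan is to argue by a direct case analysis on the two hypotheses, tracking the exact landing positions of the robots that \textsc{Spread}($r$) relocates and then testing the defining conditions of a chain against the resulting configuration. Throughout I use that in a chain configuration the multiplicity node $v_i$ is the innermost occupied node of both the CW- and the CCW-chain, and that along either chain consecutive occupied nodes sit at distance exactly $k$ (condition (ii) of the chain definition). The quantity to watch is the distance between $v_i$ and the first occupied node produced on the side along which robots move: if this distance is forced to differ from $k$, then condition (ii) fails on that side, the side becomes a block rather than a chain, and since a chain configuration requires \emph{both} sides to be chains, the whole configuration is non-chain.

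First I would treat one good and one bad chain. Here the missing edge lies on the bad chain and, by hypothesis, is not incident to $v_i$; \textsc{Spread} moves the least-ID robot of $v_i$ together with every singleton of the good chain one step in the good chain's direction $\mathcal{D}$. Since the good chain carries no missing edge, none of these designated movers is blocked, so each advances one node. I would then observe that the robot released from $v_i$ lands on the node adjacent to $v_i$ in direction $\mathcal{D}$, i.e.\ at distance $1<k$ from $v_i$, while the shifted good-chain singletons keep mutual distance $k$. Thus $v_i$ and this new neighbour are consecutive occupied nodes at distance $1\neq k$, and the good side is now exactly a non-chain block (condition (iii)(b) of the block definition) while the untouched bad side remains a chain; hence the configuration is non-chain. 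The two-bad case runs identically: the unique missing edge is incident to $v_i$, \textsc{Spread} picks $\mathcal{D}$ to be the side of $v_i$ that is free of a missing edge, and it moves the least-ID robot of $v_i$ and the singletons of the $\mathcal{D}$-chain one step in $\mathcal{D}$. Because the missing edge sits on the opposite side of $v_i$, none of these movers crosses it, so all advance, the released robot again lands at distance $1$ from $v_i$ on the $\mathcal{D}$-side breaking condition (ii), and the chain on the missing-edge side is left frozen (still a chain). Once more one side is a block, so the post-\textsc{Spread} configuration is non-chain.

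I expect the main difficulty to be bookkeeping rather than any deep idea: one must pin down the location of the missing edge relative to $v_i$ in each case (interior of the bad chain versus incident to $v_i$), confirm that in both cases the designated movers are never the ones blocked by the missing edge, and then verify that the resulting distance-$1$ gap at the multiplicity genuinely breaks the chain. It suffices to exhibit the failure of condition (ii), which is immediate once the gap is identified. Since a $1$-interval-connected adversary can realize exactly these missing-edge placements, the chain configuration \emph{may} be hampered in the stated sense, which is all the observation asserts.
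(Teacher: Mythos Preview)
Your case analysis is correct and essentially matches the reasoning the paper relies on, though the paper itself states this observation without proof---it is presented as an immediate consequence of the description of \textsc{Spread($r$)}. The detailed argument you give (release the least-ID robot from the multiplicity $v_i$ in direction $\mathcal{D}$, shift all good-side singletons by one, observe the resulting distance-$1$ gap at $v_i$ that violates condition~(ii) of the chain definition) is exactly what the paper later spells out in the proof of Lemma~\ref{lemma: non chain config contains two blocks: chain block and non-chain block}, Cases~I(b) and~I(c), when it needs this fact in earnest. So your write-up anticipates that argument rather than diverging from it.
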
 

\underline{\textbf{Subroutine} \textsc{ReconstructChain($r$)} \textbf{:}} If the configuration is not chain configuration (Fig~\ref{fig:diffnonchainconfig}) then the configuration must contain one chain block and one non-chain block (See Lemma~\ref{lemma: non chain config contains two blocks: chain block and non-chain block}). Let $\mathcal{D}$ be the direction in which the chain block occurs.

If there is no missing edge in both blocks (chain block or non-chain block), then all singleton robots on the chain block move away from the block head in the direction $\mathcal{D}$ and all robots on block head moves towards the direction $\mathcal{D}$. If the missing edge is in the non-chain block, then the procedure is the same as above.

If the missing edge is in the chain block, then the singleton robots in the non-chain block move away from the block head in direction $\mathcal{D}'$, opposite direction of $\mathcal{D}$

\begin{algorithm}
\caption{\textsc{ReconstructChain($r$)}}
\label{ConstructChain}
\LinesNumbered
   \eIf{$r$ is on block head }
   {
        \If{no edges are missing $\lor$ missing edge is in non-chain block}
        {
            $r$ moves in the direction of the chain block
        }
   }
   {
         \uIf{no edges are missing $\lor$ missing edge is in non-chain block}
         {
            \If{$r$ is on chain block}
            {
                $r$ moves away from block head\;
            }
         }
         \ElseIf{missing edge is in chain block}
         {
            \If{$r$ is on non-chain block}
            {
                $r$ moves away from the blcok head\;
            }
         }
   }
\end{algorithm}

\subsubsection{Correctness of the algorithm}
In this section, we prove the correctness of our algorithm \textsc{D-$k$-D DynamicRing}. 
\begin{lemma}\label{lemma:singleton node increases in spread}
At any round $t$, if a robot $r$ executes the subroutine \textsc{Spread($r$)} then at the end of the round $t$ the number of singleton nodes will increase.
\end{lemma}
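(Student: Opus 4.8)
The plan is to track exactly which robots move during one synchronous execution of \textsc{Spread} and to show that this collective motion manufactures at least one brand-new singleton node while destroying none. Fix the round $t$ and let $v_i$ be the multiplicity node of the chain configuration. In each of the three branches of \textsc{Spread} the effect has the same shape: the least-ID robot on $v_i$ departs toward the neighbour of $v_i$ in some direction $\mathcal{D}$, and simultaneously every singleton lying on the chain in direction $\mathcal{D}$ advances one step in direction $\mathcal{D}$. Here $\mathcal{D}$ is CCW when both chains are good, the good-chain direction when exactly one chain is bad, and the missing-edge-free side of $v_i$ when both chains are bad. So the first thing I would do is record this uniform description of the motion, reducing the three branches to a single bookkeeping argument parametrised by $\mathcal{D}$.

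The first genuine step is feasibility: a robot crosses an edge only if it is present, so I must argue that, for the $\mathcal{D}$ chosen by the algorithm, none of the traversed edges is the unique missing one. When both chains are good there is no missing edge inside either chain, so every move along the CCW chain succeeds. When exactly one chain is bad, $\mathcal{D}$ is set to the good-chain direction, all of whose edges are present. When both chains are bad the single missing edge must be incident to $v_i$ (it is the only edge common to both chains), and $\mathcal{D}$ is chosen precisely as the side on which the edge incident to $v_i$ is intact; since the missing edge sits on the opposite side of $v_i$, it lies on none of the edges the advancing robots use. This edge-by-edge case analysis, leaning on the CW/CCW-chain definitions and on the notion of a bad chain, is the part I expect to be the main obstacle, since it is where the correctness of the direction choice must be pinned down.

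With feasibility in hand I would count singleton nodes before and after. Let $v_i'$ be the neighbour of $v_i$ in direction $\mathcal{D}$. Because the first chain node in direction $\mathcal{D}$ sits at distance $k\ge 2$ (or, if the chain is just $v_i$, because of the terminal gap of length greater than $k$ from condition~(iii) of the chain), the node $v_i'$ is null before the round and receives exactly the one departing robot, hence becomes a fresh singleton. Each singleton of the $\mathcal{D}$-chain occupies a node at distance $tk$ from $v_i$ and moves to the node at distance $tk+1$; these targets are strictly between consecutive chain nodes (again using $k\ge 2$) or inside the terminal gap, so they are null, pairwise distinct, and distinct from $v_i'$. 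The bound $l\le\lfloor n/k\rfloor$ prevents the advancing end of the chain from wrapping onto the opposite chain. Consequently every shifted robot lands alone and stays a singleton, the singletons of the untouched $\mathcal{D}'$-chain are unaffected, and no node gains a second robot.

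Finally I would assemble the tally. The map sending each moved singleton to its new position is a bijection of that chain's singleton set onto itself, contributing no net change; the node $v_i'$ adds one new singleton; and $v_i$ either stays a multiplicity (if it held at least three robots) or itself drops to a singleton (if it held exactly two), which can only add further singletons. No singleton is ever lost. Hence the number of singleton nodes at the end of round $t$ strictly exceeds the number at its beginning, which is the claim.
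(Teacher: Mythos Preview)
Your proposal is correct and follows essentially the same approach as the paper: both arguments track the synchronous shift of the $\mathcal{D}$-chain, show the least-ID robot lands on a null neighbour, and verify that no moving robot collides with a stationary one. Your write-up is considerably more careful than the paper's (you treat feasibility of all moves explicitly and give a full singleton count, whereas the paper only spells out that $r_0$ itself does not create a new multiplicity), but the underlying idea is the same; one cosmetic point is that you reuse $t$ as a chain index while it already denotes the round.
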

\begin{proof}
Let, the configuration $C(t)$ be a chain configuration at time $t$, where a robot $r$ of the chain executes the sub-routine \textsc{Spread($r$)}. Then there exist two chains. For a chain configuration, the distance between any two consecutive occupied nodes in the same chain is $k$. Let $r_0$ be the robot on the multiplicity point with the least ID. Then, $r_0$ moves either clockwise or counterclockwise (depending on the existence of an edge in that same round). We first show that $r_0$ does not create a new multiplicity. Let at round $t$, $r_0$ move to $v_i$ from $v_{i-1}$ in a direction, say $\mathcal{D}$. Then the chain in direction $\mathcal{D}$ must not contain any missing edge.  Now, if $r_0$ creates a new multiplicity then at round $t$, $v_{i}$ must contain a robot that does not move during round $t$. This is not possible as all the robots on the chain in the direction $\mathcal{D}$ move away from multiplicity. Also, note that no singleton robot moves to the multiplicity. This concludes that the number of singleton nodes increases after execution of \textsc{Spread($r$)} for any robot $r$.
\end{proof}

\begin{lemma}\label{lemma: non chain config contains two blocks: chain block and non-chain block}
While executing Algorithm~\ref{DkD_DYNAMICRING}, if $C(t)$ be a non-chain configuration but not target configuration at round $t$, then the configuration $C(t)$ must have two blocks (chain block and non-chain block).
\end{lemma}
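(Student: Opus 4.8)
The plan is to prove a slightly stronger statement by induction on the round number $t$ and then read off the lemma as an immediate corollary. The invariant I would carry is that, at \emph{every} round of Algorithm~\ref{DkD_DYNAMICRING}, the configuration $C(t)$ is one of exactly three kinds: (a) a chain configuration; (b) a non-chain configuration possessing \emph{exactly} one chain block and one non-chain block, both sharing the single multiplicity node (or, after the last robot has left it, the single head) as their common block head; or (c) the target configuration $C_T$. Once this invariant is established, the lemma is immediate: a configuration that is non-chain but not target cannot be of kind (a) or (c), so it is of kind (b), which is precisely the asserted structure.

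For the base case, the initial rooted configuration is a chain configuration (as noted just after Definition~\ref{defn:chain config}), so it lies in case (a). For the inductive step I would split on which branch of Algorithm~\ref{DkD_DYNAMICRING} fires. If $C(t)$ is a chain configuration, \textsc{Spread($r$)} runs; in each of its three sub-cases exactly one of the two chains is advanced by a single hop while the other is left untouched, so the advanced side acquires a leading inter-robot distance $1<k$ (becoming a non-chain block) while the untouched side keeps all consecutive distances equal to $k$ (remaining a chain block), and by Lemma~\ref{lemma:singleton node increases in spread} no new multiplicity appears; hence $C(t+1)$ is in case (b). If $C(t)$ is already in case (b), \textsc{ReconstructChain($r$)} runs, and here I would case-split on the location of the at most one missing edge: when no edge is missing or it lies in the non-chain block, the whole block head together with the chain-block singletons advances one hop toward the chain block, which keeps the chain block a chain block and widens the single defective gap of the non-chain block by one; when the missing edge lies in the chain block, the block head stays put and the non-chain-block singletons retreat one hop, again widening that defective gap while leaving the chain block untouched. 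In every sub-case the chain block stays a chain block and the non-chain block retains its unique sub-$k$ gap, so $C(t+1)$ is again in case (b) --- except that once the widening gap reaches $k$ the configuration closes up into case (a) if a multiplicity still remains, or into case (c) if the center has already become a head.

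The two features I would verify most carefully are the block \emph{count} and the block \emph{types}. There are exactly two blocks because all occupied nodes emanate from the single block head in the two ring directions, and each direction terminates at the one large null arc the robots have not yet reached, which supplies the $>k$ gap demanded by condition (iv) of Definition~\ref{defn:block}; since $l\le\lfloor n/k\rfloor$ and the configuration is not yet a full distance-$k$ tiling of the ring away from $C_T$, such a terminating gap always exists, ruling out both a third block and a merger. As for the types, not both blocks can be chain blocks (two chain blocks would reconstitute a chain configuration, contradicting the non-chain hypothesis), and a second non-chain block cannot arise either, because \textsc{Spread($r$)} disturbs only one side and \textsc{ReconstructChain($r$)} is designed precisely to preserve the chain block as a chain block; hence at most one sub-$k$ defect is ever present, giving exactly one non-chain block and one chain block.

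The main obstacle I anticipate is the book-keeping in the \textsc{ReconstructChain($r$)} step under the $1$-interval adversary. For each missing-edge sub-case I must confirm that the prescribed moves are actually executable --- which holds because the adversary may delete only a single edge and the algorithm always moves the active robots along edges away from it --- and that no two robots ever land on the same node, so that the single multiplicity (respectively head) is preserved and no spurious block head is created; this uses $k\ge 2$ together with the exact distance-$k$ spacing inside each block, which guarantees a unit shift cannot cause a collision. A final point needing care is the transition at the end of the last \textsc{Spread($r$)}, where the multiplicity-headed block becomes a head-headed block: I would check that the freshly relocated robot becomes the unique head and serves as the common block head, so that the block head of Definition~\ref{defn:block} remains well defined and the case-(b) structure is preserved throughout.
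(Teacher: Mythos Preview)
Your proposal is correct and follows essentially the same inductive strategy as the paper: both argue by induction on the round number, split into the cases where $C(t-1)$ is a chain configuration (handled via the three sub-cases of \textsc{Spread}) versus a non-chain configuration with two blocks (handled via the missing-edge sub-cases of \textsc{ReconstructChain}), and verify in each that the two-block structure persists. Your explicit trichotomy invariant (a)/(b)/(c) is a slightly cleaner packaging of what the paper does more implicitly, and the one place where the paper is more explicit than you---the counting argument $n\ge lk$ used to show the terminating gap in condition~(iv) of Definition~\ref{defn:block} is strictly greater than $k$---is exactly what your phrase ``since $l\le\lfloor n/k\rfloor$ \ldots\ such a terminating gap always exists'' is pointing at, so you should spell that arithmetic out when writing the proof in full.
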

\begin{proof}
Let $C(t)$ be a configuration at round $t$, which is a non-chain configuration other than the target configuration while executing the Algorithm ~\ref{DkD_DYNAMICRING}. Then $C(t-1)$ was either chain configuration or non-chain configuration. 

\textit{Case I:} $C(t-1)$ be a chain configuration. Then by definition~\ref{defn:chain config} distance between any two consecutive occupied nodes in the configuration $C(t-1)$ is equal to $k$. As the configuration $C(t-1)$ is chain configuration, then the robots execute the algorithm~\ref{Spread}. Note that, three cases may arise:

\textit{Case I(a):} In the chain configuration $C(t-1)$, both chains are good. In this case, by executing Algorithm~\ref{Spread}, the least ID robot (say $r_0$) on the multiplicity node (say $v_0$) moves in the CCW direction and all other robots on the CCW chain move one hop in the CCW direction. So in configuration $C(t)$, $d_{CCW}(v_0,v_1)=1$, which is less than $k$ ( as the configuration is a non-chain configuration which is not target configuration too), where $v_1$ is the next node of $v_0$ in the counterclockwise direction. Again, for all other consecutive nodes $v_x$, $v_y$ in the chain, $d_{CCW}(v_x,v_y)=k$.

Let $r_p$ be the last  robot on counterclockwise chain in $C(t-1)$ located at $v_{p-1}$ and $r_q$ be the robot on adjacent occupied node of $v_{p-1}$ in $C(t-1)$. Let the position of $r_p$ in $C(t)$ be $v_p$. Then $d_{CCW}(v_{p-1}, v_p)=1$. Let $v_q$ be the adjacent occupied node of $r_p$ in the counter-clockwise direction in both $C(t-1)$ and $C(t)$ (as the robot $r_q$ on the node $v_q$ belongs to the clockwise chain and no robots on clockwise chain moves in $C(t-1)$). To show that $C(t)$ contains a counter-clockwise block, we have to show that, $d_{CCW}(v_p,v_q)>k$ in $C(t)$. Since $C(t-1)$ is a chain configuration $d_{CCW}(v_{p-1},v_q) >k \implies d_{CCW}(v_p,v_q)\ge k$. We show that if $d_{CCW}(v_p,v_q)= k$ then we arrive at a contradiction. If  $d_{CCW}(v_p,v_q)= k$ then, in $C(t)$, except robots at $v_0$ and $v_1$, all other pairs of consecutive occupied nodes will have distance $k$ between them. Let $l'$ be the number of robot positions in $C(t)$ then $n=1+(l'-1)k$. In target configuration distance between any two adjacent occupied nodes must have to be at least $k$. So $n\ge lk$ where $l$ is the total number of robots deployed on the ring. So, $1+(l'-1)k \ge lk \ge l'k \implies k \le 1$. This contradicts the fact that $k>1$. So, $d_{CCW}(v_p,v_q)> k$. 

Then in configuration $C(t)$, a block occurs in the CCW direction which is a non-chain block. Also since the distance between all adjacent occupied nodes in the arc $(v_0,v_q)_{CW}$ is exactly $k$ and $d_{CW}(v_q,v_p)>k$ in $C(t)$ (as argued above). So, a block occurs in the clockwise direction too which is a chain block by the definition~\ref{defn:block}. 

\textit{Case I(b):} One chain is good and another is bad in the chain configuration $C(t-1)$. Let $\mathcal{D}$ be the direction (either clockwise or counter-clockwise) in which the good chain occurs. Then by Algorithm~\ref{Spread}, the least ID robot on the multiplicity node moves once in direction $\mathcal{D}$ and all other robots on the chain in direction $\mathcal{D}$ move once in the direction $\mathcal{D}$. Thus by similar argument, in configuration $C(t)$, a non-chain block occurs in direction $\mathcal{D}$ and a chain block occurs in the direction $\mathcal{D}'$, which is the opposite direction of $\mathcal{D}$.

\textit{Case I(c):} In the chain configuration $C(t-1)$, both chains are bad chain. In this case, the missing edge must be in between the multiplicity and its immediate next node in some direction, say $\mathcal{D}$ (either clockwise or counter-clockwise). Then executing Algorithm~\ref{Spread}, the least ID robot on the multiplicity node moves once in direction $\mathcal{D}$ and all other robots on the chain in direction $\mathcal{D}$ move once in the direction $\mathcal{D}$. Thus in this case also, the configuration $C(t)$ must have two blocks: a non-chain block in direction $\mathcal{D}$ and a chain block in the opposite direction $\mathcal{D}'$.

 Note that from the above argument, the first non-chain configuration will have two blocks. We now prove that for any non-chain configuration having two blocks if the next configuration is again a non-chain configuration it will have two blocks if the next configuration is not the target configuration. This will be enough to prove that, while executing Algorithm~\ref{DkD_DYNAMICRING}, any non-chain configuration will have two blocks. 
 We discuss this in the next case.
 
\textit{Case II:} Let $C(t-1)$ be a non-chain configuration which has two blocks. We have to show that the configuration $C(t)$ also has two blocks if $C(t)$ is not a chain configuration. As $C(t-1)$ is a non-chain configuration with two blocks then two cases can arise :

\textit{Case II(a):} The missing edge in non-chain block or no missing edge. Let the chain block be in the direction $\mathcal{D}$. According to the algorithm~\ref{ConstructChain} the robots on multiplicity move in the direction $\mathcal{D}$ and the singleton robots on chain block move away from the block head in direction $\mathcal{D}$. Thus in $C(t)$ distance between any two adjacent occupied nodes remains the same if those nodes were in chain block in $C(t-1)$. Let $v_0$ be the block head in $C(t-1)$ and $v_1$ be the adjacent occupied node of $v_0$ in direction $\mathcal{D}'$ in $C(t-1)$, where $\mathcal{D}'$ is the opposite direction of $\mathcal{D}$. Let $v_{i+1}$ be the adjacent occupied node of $v_{i}$ in direction $\mathcal{D}'$ in $C(t-1)$ where $i\in \{1,2, \cdots, p-1\}$ for some $p\in \mathbb{N} $. Note that in $C(t)$ the robots which were on $v_0$ in $C(t-1)$ move one hop in direction $\mathcal{D}$. So in $C(t)$, the sequence of distances between adjacent occupied nodes in direction $\mathcal{D}'$ starting from the robot position containing the robots which were forming the block head in $C(t-1)$ up to the robot position at $v_{p}$ is $k_0, k, k, \dots k($ total $ p $ terms), where $k_0  \le k.$ Also let $r_p$ and $r_q$ be the last robots on the non-chain and chain blocks in $C(t-1)$ located at vertices, say $v_p$ and $v_q$. Then in $C(t-1)$, distance between $v_p$ and $v_q$ is strictly greater than $k$. In $C(t)$ the position of $r_p$ and $r_q$ is either $v_{p+1}$ and $v_q$ or $v_p$ and $v_{q+1}$ where $v_{p+1}$ and $v_{q+1}$ are the adjacent nodes of $v_p$ and $v_q$ in the direction of shortest distance between $r_p$ and $r_q$ in $C(t)$. Now for both cases, we can argue that the shortest distance between $r_p$ and $r_q$ is strictly greater than $k$ in $C(t)$ (argument similar as in \textit{Case I(a)}). This implies that $C(t)$ has two blocks again. One of them is a chain block and the other one is a non-chain block. 

\textit{Case II(b):} There is missing edge in chain block. In this case, executing the algorithm~\ref{ConstructChain}, all singleton robots in non-chain block move in some direction $\mathcal{D}$ away from the block head. Let the sequence of adjacent occupied nodes in direction $\mathcal{D}$ that is in the non-chain block in $C(t-1)$ is $v_0,v_1,\dots, v_p$. This way the sequence of adjacent occupied nodes in direction $\mathcal{D}$ in $C(t)$ for all robots that were on the non-chain block of $C(t-1)$ is $v_0, v_1',v_2'\dots,v_p'$ where $v_i'$ is the adjacent node of $v_i$ in direction $\mathcal{D}$ for $i \in \{1,2,\dots, p\}$. Then The sequence of distances of adjacent occupied nodes in direction $\mathcal{D}$ starting from $v_0$ up to $v_p'$ in $C(t)$ is $k_0,k,k,\dots, k (p+1$ terms$)$ where $k_0 \le k$. Now similar to the argument for the above case, it can be shown that $C(t)$ has two blocks one is a chain block and the other is a non-chain block. 
\end{proof}

\begin{lemma}\label{lemma: non chain configuration 1st time with d(H,H+1)=1.}

Let $C(t)$ be a chain configuration while $C(t+1)$ be a non-chain configuration during the execution of Algorithm~\ref{DkD_DYNAMICRING}. Then the following statements are true.
\begin{enumerate}
    \item In $C(t+1)$, there exists a direction $\mathcal{D}$ such that $d_{\mathcal{D}}(H, H+1)=1$, where $H$ and $H+1$ are the block head and the adjacent occupied node of block head in direction $\mathcal{D}$
    \item $d_{\mathcal{D}}(v_x,v_y)=k$, for all consecutive occupied nodes on the block in direction $\mathcal{D}$ where $v_x \ne H$.
    \item  $d_{\mathcal{D'}}(v_x,v_y) =k$ for all consecutive occupied nodes $v_x$ and $v_y$ on the block in the direction $\mathcal{D}'$, where $\mathcal{D}'$ be the opposite direction of $\mathcal{D}$.
\end{enumerate}
\end{lemma}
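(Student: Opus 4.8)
The plan is to read all three parts off a careful trace of the single round in which \textsc{Spread} is executed, since $C(t)$ being a chain configuration forces the robots to run Algorithm~\ref{Spread} (by Algorithm~\ref{DkD_DYNAMICRING}). First I would unify the three branches of \textsc{Spread} into one movement pattern. In each branch (both chains good, one good and one bad, both bad) the subroutine selects a single direction $\mathcal{D}$ along which the active chain lies and along which no edge is missing: for two good chains $\mathcal{D}$ is the CCW direction, for one bad chain $\mathcal{D}$ is the direction of the good chain, and for two bad chains $\mathcal{D}$ is the direction in which the multiplicity has no incident missing edge. In every branch the least-ID robot leaves the block head $H$ (the multiplicity) one hop in direction $\mathcal{D}$, every singleton of the $\mathcal{D}$-chain advances one hop in $\mathcal{D}$, and no robot of the $\mathcal{D}'$-chain moves. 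I would note that $1$-interval connectivity guarantees these moves succeed: at most one edge is missing, and by the very choice of $\mathcal{D}$ it does not lie on the $\mathcal{D}$-chain.

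With this uniform picture, parts 1--3 are essentially bookkeeping on distances. For part 1, the leaving robot was co-located with $H$ in $C(t)$ and sits one hop away in $C(t+1)$, so $d_{\mathcal{D}}(H,H+1)=1$; since $C(t+1)$ is non-chain, Lemma~\ref{lemma: non chain config contains two blocks: chain block and non-chain block} (its Case~I, which is exactly the chain-to-non-chain transition) tells us $H$ is the block head of the $\mathcal{D}$-block and that this pair is its defining sub-$k$ gap. For part 2, in the chain configuration $C(t)$ all consecutive occupied nodes of the $\mathcal{D}$-chain lie at distance exactly $k$ (Definition~\ref{defn:chain config}); a rigid one-hop shift of all these robots in direction $\mathcal{D}$ preserves those distances, and the moved robot at $H+1$ keeps distance $k$ to the first shifted chain robot, so every consecutive pair on the $\mathcal{D}$-block other than $(H,H+1)$ is still at distance $k$. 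For part 3, the robots of the $\mathcal{D}'$-chain do not move at all, hence their mutual distances remain the value $k$ they had in $C(t)$.

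The one point that needs more than bookkeeping is confirming that $C(t+1)$ really splits into these two blocks, i.e. that the gap between the last advanced $\mathcal{D}$-robot and the first (stationary) $\mathcal{D}'$-robot stays strictly larger than $k$ after the shift, so that no spurious distance-$k$ identification collapses the two blocks into a single chain. This is where I expect the main obstacle, and I would dispose of it by the counting argument already used in Case~I(a) of Lemma~\ref{lemma: non chain config contains two blocks: chain block and non-chain block}: were that far gap exactly $k$, then, writing $l'$ for the number of occupied positions, every consecutive gap would equal $k$ except the single unit gap $(H,H+1)$, forcing $n=1+(l'-1)k$; combined with the feasibility bound $n\ge lk\ge l'k$ coming from $l\le\lfloor n/k\rfloor$ this yields $k\le 1$, contradicting $k>1$. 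Hence the far gap is $>k$, so $H$ heads a genuine non-chain block in direction $\mathcal{D}$ and a genuine chain block in direction $\mathcal{D}'$, and parts 1--3 hold as stated. A minor check I would not skip is that $H$ still qualifies as a block head in $C(t+1)$ (it remains a multiplicity, or becomes a head once its last surplus robot has left), which follows from $l\le\lfloor n/k\rfloor$ together with Definition~\ref{defn:block}.
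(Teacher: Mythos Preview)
Your proposal is correct and follows essentially the same approach as the paper's proof. The paper treats the three branches of \textsc{Spread} (both good, one good/one bad, both bad) as separate cases with identical internal logic, whereas you unify them upfront by observing that each branch selects a single direction $\mathcal{D}$ along which the active chain has no missing edge; this is a mild streamlining but not a different route. Your explicit reprise of the far-gap counting argument is already subsumed by your invocation of Lemma~\ref{lemma: non chain config contains two blocks: chain block and non-chain block} (which is exactly how the paper disposes of that point), so you could safely drop it.
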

\begin{proof}
Let for some round $t \ge 0$ the configuration $C(t)$ be a chain configuration and during the execution of Algorithm~\ref{DkD_DYNAMICRING} at round $t$, the configuration $C(t+1)$ becomes non-chain configuration. 
In $C(t)$, the distance between any two consecutive occupied nodes in the same chain is equal to $k$ (By Definition~\ref{defn:chain config}). Now by Algorithm~\ref{DkD_DYNAMICRING}, all agents in $C(t)$ executes Algorithm~\ref{Spread}. By Lemma~\ref{lemma: non chain config contains two blocks: chain block and non-chain block}, $C(t+1)$ must have two blocks one chain block and one non-chain block. Note that in $C(t)$ there can be three cases, 

\textit{Case I:} Both chains are good chain in $C(t)$. In this case, the least ID robot, say $r$ on multiplicity, say $v$, moves counter-clockwise to the adjacent node, say $v'$ and all other singleton nodes on the counter-clockwise good chain in $C(t)$ moves counter-clockwise once. So, now in $C(t+1)$, $v$ and $v'$ both have robots. Also, let in $C(t)$, $r_i$, and $r_j$ be two robots on the counter-clockwise chain such that they are located on $v_i$ and $v_j$, where $v_i$ and $v_j$ are two consecutive occupied singleton nodes. Then in $C(t+1)$, $r_i$ and $r_j$ are located on $v_i'$ and $v_j'$ respectively, where $v_i'$ and $v_j'$ are the counterclockwise adjacent nodes of $v_i$ and $v_j$ respectively. Now, note that in $C(t+1)$, $d_{CCW}(r_i,r_j)= d_{CCW}(v_i',v_j') = d_{CCW}(v_i,v_j)= k$. Thus we prove the first and second statements for this case.  
Now note that in the clockwise chain of $C(t)$ all the robot positions stay the same. So third statement follows directly. 

\textit{Case II:} In $C(t)$ one chain is good and one is bad. Let $\mathcal{D}$ be the direction in which the good chain occurs i.e. the missing edge does not happen to be in the chain in direction $\mathcal{D}$. Then in this case, The least ID robot, say $r$, on the multiplicity, say $v$, moves one hop in direction $\mathcal{D}$ and all the robots on singleton nodes of the good chain in $C(t)$ moves one hop in direction $\mathcal{D}$. Also no robot on the bad chain in $C(t)$ moves (except the least ID robot that moves in direction $\mathcal{D}$). So by a similar argument for the case above, we can conclude all three statements.

\textit{Case III:} In $C(t)$ both chains are bad chain. Then the missing edge is incident to the multiplicity in some direction, say $\mathcal{D}$. Then in the direction $\mathcal{D}'$, the opposite direction of $\mathcal{D}$, multiplicity has no missing edge. Then in this case the least ID robot, say $r$, on multiplicity moves once in direction $\mathcal{D}'$ and all the robots on singleton nodes of the chain in direction $\mathcal{D}'$ move once in direction $\mathcal{D}'$. Then by the similar argument as above, all three statements can be concluded.
\end{proof}

\begin{lemma}\label{lemma: non chain config with d(H,H+1)=d, next round distance increases 1.}

Let for some $t>0$, $C(t)$ be a non-chain configuration that is not the target configuration. Let $\mathcal{D}$ be the direction of the non-chain block in $C(t)$. Let $d_{\mathcal{D}}(H,H+1)=d <k$ where $H$ is block head and $H+1$ is the  adjacent occupied node of $H$ in direction $\mathcal{D}$. Then $d_{\mathcal{D}}(H',H'+1)=d+1$ in $C(t+1)$. Here $H'$ is either block head or multiplicity in $C(t+1)$ and $H'+1$ is the adjacent occupied node of $H'$ in direction $\mathcal{D}$ in $C(t+1)$.  
\end{lemma}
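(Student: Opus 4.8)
The plan is to argue directly from the subroutine \textsc{ReconstructChain($r$)} (Algorithm~\ref{ConstructChain}), splitting into cases according to where the missing edge lies. First I would invoke Lemma~\ref{lemma: non chain config contains two blocks: chain block and non-chain block} to guarantee that $C(t)$ consists of exactly one chain block and one non-chain block; since the non-chain block is the one in direction $\mathcal{D}$, the chain block occupies the opposite direction $\mathcal{D}'$, and the two blocks share the block head $H$. Because $C(t)$ is non-chain, every robot executes Algorithm~\ref{ConstructChain}, so the task reduces to tracking which occupied node moves and in which direction, then recomputing $d_{\mathcal{D}}(H',H'+1)$.

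In the first case---no missing edge, or the missing edge inside the non-chain block---I would note that by Algorithm~\ref{ConstructChain} every robot on the block head $H$ moves one hop in direction $\mathcal{D}'$ (toward the chain block), while the non-chain-block singletons, in particular the robot(s) at $H+1$, stay fixed. The edge the block head traverses lies in the chain block, hence is present, so the move is legal. Setting $H'$ to be the node adjacent to $H$ in direction $\mathcal{D}'$, all robots of the former block head now occupy $H'$ (so $H'$ is a multiplicity, or the relocated head, in $C(t+1)$) and $H+1$ is unmoved, whence $d_{\mathcal{D}}(H',H+1)=d+1$ with $H+1=H'+1$.

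In the second case---missing edge inside the chain block---the block-head branch of Algorithm~\ref{ConstructChain} is not triggered, so $H$ remains fixed ($H'=H$) and instead every non-chain-block singleton moves one hop in direction $\mathcal{D}$ away from the head. The robot at $H+1$ crosses the edge toward $H+2$, which lies in the non-chain block and is therefore present; denoting its new position $H'+1$, I obtain $d_{\mathcal{D}}(H,H'+1)=d+1$, again the desired equality.

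The hard part will be the two small consistency checks rather than the distance arithmetic: (i) confirming that the robot which must move is never blocked, which I would settle by observing that in each case the edge it uses lies in the block opposite to the one holding the missing edge; and (ii) confirming that no newly created multiplicity changes which node serves as $H'+1$. For (ii) I would argue that in the first case the non-chain block is left completely untouched, and in the second case all its singletons translate uniformly by one hop so their mutual gaps are preserved; in both situations the first occupied node met from the head along $\mathcal{D}$ is still (the image of) $H+1$. A secondary nuisance is keeping the two directional conventions straight---$\mathcal{D}$ denotes the non-chain direction in the lemma but the chain-block direction inside the algorithm---so I would fix notation explicitly before the case split.
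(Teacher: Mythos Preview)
Your proposal is correct and follows the same approach as the paper: a case split on the location of the missing edge, then tracking the one-hop displacement of either $H$ or $H{+}1$. Your grouping of ``no missing edge'' with ``missing edge in the non-chain block'' is in fact the natural one dictated by Algorithm~\ref{ConstructChain} (the paper separates them into three cases and then says the third is ``similar'' to one of the others), and your explicit edge-availability and non-collision checks (i) and (ii) make precise points the paper leaves implicit.
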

\begin{proof}
Let $C(t)$ be a non-chain configuration that is not the target configuration. then by Lemma~\ref{lemma: non chain config contains two blocks: chain block and non-chain block}, $C(t)$ must have two blocks, one chain block, and one non-chain block. Let $H$ be the block head in $C(t)$. Note that the distance from $H$ to its adjacent occupied node in the direction of the chain block say $\mathcal{D}_1$, must be $k$. Let $\mathcal{D}$ be the direction of the non-chain block in $C(t)$. So if $\mathcal{D}$ be the direction for which $d_{\mathcal{D}}
(H, H+1)=d<k$  in $C(t)$, where $H+1$ is the adjacent occupied node of $H$, then $\mathcal{D}$ must be same as $\mathcal{D}_1'$. Now as $C(t)$ is a non-chain configuration then three cases can occur.

\textit{Case I:} Let the missing edge be on the chain block. In this case, according to the algorithm~\ref{ConstructChain}, all singleton robots on the non-chain block move away from the block head $H$ in the direction of $\mathcal{D}$ i.e. the direction of the non-chain block. So the singleton robot on $H+1$, the adjacent occupied node of the block head $H$, moves one hop distance in direction $\mathcal{D}$ in the configuration $C(t)$. If in configuration $C(t+1)$, $H'$ and $H'+1$ be the new positions of the block head and adjacent occupied node of block head in direction $\mathcal{D}$, then position $H$ in $C(t)$ is same as the position $H'$ in $C(t+1)$ and the position $H'+1$ in $C(t+1)$ is one hop ahead from the position $H+1$ in the configuration $C(t)$. Then in configuration $C(t+1)$, the distance between $H'$ and $H'+1$ will be $d+1$ in the direction $\mathcal{D}$.

\textit{Case II:} Let the missing edge be on the non-chain block. Then according to the algorithm~\ref{ConstructChain}, the singleton robots on the chain block and the robots on the block head, positioned at $H$, move once in the direction of the chain block $\mathcal{D}_1$ in the configuration $C(t)$. As the robots on the non-chain block does not move in $C(t)$, then in $C(t)$ the position $H+1$ (the position of the adjacent occupied node of block head in direction $\mathcal{D}$ in $C(t)$) is same as the position $H'+1$ (the position of the adjacent occupied node of block head in direction $\mathcal{D}$ in $C(t+1)$) in the configuration $C(t+1)$. But in configuration $C(t+1)$, the new position of the block head $H'$ is one hop away from the position of the block head $H$ in $C(t)$ in the direction $\mathcal{D}_1$ (same as $\mathcal{D}'$). So in the configuration $C(t+1)$, the distance between $H'$ and $H'+1$ in direction $\mathcal{D}$ will be increased by $d+1$.

In this case, all singleton robots on the chain block move away from $H$ and the robots on $H$ move towards the chain block. So the distance between $H$ and $H+1$ becomes $d+1$ in the configuration $C(t+1)$. 

\textit{Case III:} Let there be no missing edge in the non-chain configuration. The argument for this case is similar to the argument for \textit{Case I}. 
\end{proof}

\begin{lemma}\label{lemma: non-chain config to chain-config occurs in at most k-1 round.}

Let $C(t)$ be a non-chain configuration for some $t>0$, which is not the target configuration. Let $C(t')$ be the first chain configuration (if exists) or the target configuration after $C(t)$, where $t'>t$. Then $t'-t \le k-1$.
\end{lemma}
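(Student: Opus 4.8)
The plan is to track a single integer parameter, the distance $d = d_{\mathcal{D}}(H, H+1)$ from the block head $H$ to its next occupied node across the non-chain block, and to show that this parameter increases by exactly one every round until it reaches $k$, at which moment the non-chain block disappears and the configuration is forced to be a chain configuration or the target configuration. To set this up, I would first invoke Lemma~\ref{lemma: non chain config contains two blocks: chain block and non-chain block} to assert that $C(t)$, being a non-chain configuration other than the target, splits into exactly one chain block and one non-chain block; let $\mathcal{D}$ be the direction of the non-chain block, $H$ its block head, and $d = d_{\mathcal{D}}(H, H+1)$. By condition (iii)(b) in the definition of a (non-chain) block (Definition~\ref{defn:block}) this opening gap satisfies $d < k$, and since $H$ and $H+1$ are distinct occupied nodes we have $d \ge 1$; hence $1 \le d \le k-1$.

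The engine of the argument is an induction driven by Lemma~\ref{lemma: non chain config with d(H,H+1)=d, next round distance increases 1.}. The key observation is that a configuration reached by the algorithm is a non-chain configuration that is not the target \emph{exactly} as long as this opening gap stays strictly below $k$: while the gap is some value $<k$ there is a pair of consecutive occupied nodes closer than $k$, so the configuration can be neither a chain configuration (all intra-chain gaps equal $k$, by Definition~\ref{defn:chain config}) nor the target (all consecutive gaps at least $k$). Thus, whenever the current opening gap is $d+i<k$, Lemma~\ref{lemma: non chain config contains two blocks: chain block and non-chain block} still guarantees the two-block structure and Lemma~\ref{lemma: non chain config with d(H,H+1)=d, next round distance increases 1.} raises the gap to $d+i+1$ in the next round, preserving the direction $\mathcal{D}$. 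Iterating, the opening gap at round $t+i$ equals $d+i$ for every $i$ with $d+i \le k$.

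I would then close the argument at the first round where the gap reaches $k$, namely $i = k-d$. At $C(t+(k-d))$ every pair of consecutive occupied nodes in the former non-chain block now sits at distance exactly $k$ (the opening gap became $k$, and all later gaps were already $k$), while the chain block already had all gaps equal to $k$ and both far ends exceed $k$ by condition (iv) of Definition~\ref{defn:block}. By Definition~\ref{defn:chain config} this is precisely a chain configuration when $H$ is a multiplicity node, or the target configuration when $H$ is a (singleton) head. Since every round strictly between $t$ and $t+(k-d)$ has opening gap $<k$ and so is neither a chain nor the target configuration, the first such configuration occurs exactly at $t' = t+(k-d)$; as $d \ge 1$ we obtain $t'-t = k-d \le k-1$.

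The main obstacle I anticipate is the structural bookkeeping inside the induction step rather than any calculation: I must verify that the hypotheses of Lemma~\ref{lemma: non chain config with d(H,H+1)=d, next round distance increases 1.} genuinely hold at every intermediate round, i.e.\ that the configuration remains non-chain with the same two-block orientation $\mathcal{D}$ and is never the target while the gap is below $k$, and that ``opening gap equals $k$'' really certifies a chain or target configuration through Definitions~\ref{defn:chain config} and~\ref{defn:block} rather than some intermediate hybrid. Once these checks are secured, the final counting $t'-t = k-d \le k-1$ is immediate.
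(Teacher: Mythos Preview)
Your proposal is correct and follows essentially the same approach as the paper: both invoke Lemma~\ref{lemma: non chain config contains two blocks: chain block and non-chain block} to get the two-block structure, track the opening gap $d=d_{\mathcal{D}}(H,H+1)$, use Lemma~\ref{lemma: non chain config with d(H,H+1)=d, next round distance increases 1.} to increase it by one each round, and conclude that after $k-d\le k-1$ rounds the configuration is chain or target. Your write-up is in fact a bit more careful than the paper's in explicitly verifying that the hypotheses of Lemma~\ref{lemma: non chain config with d(H,H+1)=d, next round distance increases 1.} persist at each intermediate round and that ``gap $=k$'' certifies a chain/target configuration.
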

\begin{proof}
Let the configuration $C(t)$ be a non-chain configuration at round $t>0$, which is not the target configuration. Then $C(t)$ has two blocks: one chain block and another non-chain block by lemma\ref{lemma: non chain config contains two blocks: chain block and non-chain block}. Let $\mathcal{D}$ be the direction of the non-chain block. Then $d_{\mathcal{D}}(H,H+1)=d<k$, where $H $ is block head in $C(t)$ and $H+1$ is the adjacent occupied node in direction $\mathcal{D}$ in $C(t)$. Let $k-d=i>0$. $i$ can be at most $k-1$. If $C(t+1)$ is the target configuration or a chain configuration then we are done. Otherwise, the distance between the position of the new block head, say $H'$, and the adjacent occupied node of $H'$ in the direction of the non-chain block, say $H'+1$, becomes $d+1$ in $C(t+1)$ (by Lemma~\ref{lemma: non chain config with d(H,H+1)=d, next round distance increases 1.}). This way until the configuration becomes a chain configuration or the target configuration, the distance between the block head and the adjacent occupied node in the direction of the non-chain block increases in each round. thus note that in $C(t+i)$ the distance between each pair of adjacent occupied nodes becomes at least $k$. Thus $C(t+i)$ is either the target configuration (if $C(t+i)$ has no multiplicity)or, it becomes a chain configuration. So let us take $t'=t+i$ thus, $t'-t=i \le k-1$. This concludes the claim.

\end{proof}

\begin{theorem}\label{th:f}
The algorithm \textsc{D-$k$-D DynamicRing} will terminate in $O(n)$ rounds.  
\end{theorem}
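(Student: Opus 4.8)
The plan is to bound the total running time by separating it into two counts---the number of \emph{iterations} and the number of rounds spent inside each iteration---and then to collapse the resulting product into $O(n)$ using the standing hypothesis $l \le \lfloor n/k\rfloor$.

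First I would make the notion of an iteration precise: one round executing \textsc{Spread($r$)}, followed by a maximal run of consecutive \textsc{ReconstructChain($r$)} rounds that terminates as soon as a chain configuration (or the target configuration $C_T$) is reached. The length of this run is already controlled: Lemma~\ref{lemma: non-chain config to chain-config occurs in at most k-1 round.} guarantees that once a chain configuration is broken into a non-chain configuration, at most $k-1$ rounds of \textsc{ReconstructChain($r$)} restore a chain configuration or reach $C_T$. Adding the single \textsc{Spread($r$)} round shows that each iteration lasts at most $k$ rounds. Note that the initial rooted configuration is a chain configuration, so the process genuinely begins with a \textsc{Spread($r$)} step, and Lemma~\ref{lemma: non chain configuration 1st time with d(H,H+1)=1.} together with Lemma~\ref{lemma: non chain config with d(H,H+1)=d, next round distance increases 1.} provides the inductive engine driving the distance $d_{\mathcal{D}}(H,H+1)$ from $1$ up to $k$ within those $k-1$ rounds.

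Second I would establish the invariant that each iteration strictly decreases the number of robots on the multiplicity node by exactly one. By Lemma~\ref{lemma:singleton node increases in spread}, the \textsc{Spread($r$)} step relocates the least-ID robot off the multiplicity and creates a fresh singleton without creating a new multiplicity, so the multiplicity count drops by one. It then remains to verify that the subsequent \textsc{ReconstructChain($r$)} rounds leave the multiplicity count unchanged: inspecting Algorithm~\ref{ConstructChain}, the block head (which carries the multiplicity) either stays put or moves as a whole unit toward the chain block, while singleton robots only shift within their own block, so no robot is ever added to or removed from the multiplicity during reconstruction. Consequently, after at most $l-1$ iterations the multiplicity is exhausted and the configuration is dispersed.

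Third I would combine the two counts. Reaching a dispersed configuration costs at most $l-1$ iterations of at most $k$ rounds each, hence at most $(l-1)k$ rounds; a final run of at most $k-1$ rounds of \textsc{ReconstructChain($r$)} (again by Lemma~\ref{lemma: non-chain config to chain-config occurs in at most k-1 round.}, whose conclusion also covers the target configuration) then produces $C_T$. The total is therefore at most $(l-1)k+(k-1)=lk-1$, and since $l\le \lfloor n/k\rfloor$ gives $lk\le n$, the algorithm terminates within $n-1 = O(n)$ rounds, matching the $\Omega(n)$ lower bound of Corollary~\ref{th:f'}. The main obstacle I anticipate is the bookkeeping in the second step: verifying rigorously, across all three chain/block cases, that neither \textsc{Spread($r$)} nor any \textsc{ReconstructChain($r$)} round ever increases the multiplicity count or stalls the per-iteration decrement. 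The adversary's single missing edge is the delicate point, since in the bad-chain cases the departing robot could in principle be blocked; I would confirm that \textsc{Spread($r$)} always selects a direction whose adjacent edge is present---which its case analysis (choosing the good chain's direction, or the non-missing-edge side when both chains are bad) is precisely designed to ensure---so that exactly one robot leaves the multiplicity in every iteration.
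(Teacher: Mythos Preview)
Your proposal is correct and follows essentially the same route as the paper: bound each iteration by at most $k$ rounds via Lemma~\ref{lemma: non-chain config to chain-config occurs in at most k-1 round.}, show that each iteration decrements the multiplicity by exactly one (using Lemma~\ref{lemma:singleton node increases in spread} for the \textsc{Spread} step and the observation that \textsc{ReconstructChain} moves the block head as a unit), and then invoke $l\le\lfloor n/k\rfloor$ to collapse the product to $O(n)$. Your final tally $(l-1)k+(k-1)=lk-1$ slightly overcounts---by your own definition the $(l-1)$-th iteration already includes the trailing \textsc{ReconstructChain} rounds that reach $C_T$, so the paper's bound of $(l-1)k$ is a bit sharper---but both are below $n$, so the conclusion is unaffected.
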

\begin{proof}Let the initial configuration $C(0)$ has $l$ many robots  on the multiplicity node. Also in $C(0)$, there are no other robot positions except the multiplicity node. So, $C(0)$ is a chain configuration by Definition~\ref{defn:chain config}. Now according to Algorithm~\ref{DkD_DYNAMICRING}, the robots at round 0 execute the subroutine \textsc{Spread}. So $C(1)$ is either a chain configuration or a non-chain configuration having two blocks, one is a chain block and the other is a non-chain block (Lemma~\ref{lemma: non chain config contains two blocks: chain block and non-chain block}). Also by Lemma~\ref{lemma:singleton node increases in spread}, the singleton node increases from 0 to 1 in $C(1)$. Thus, $C(1)$ has two robot positions one is a multiplicity with $l-1$ robots and the other one is a singleton node. If $C(1)$ is not a chain configuration, then by Lemma~\ref{lemma: non-chain config to chain-config occurs in at most k-1 round.}, within at most $k-1$ rounds the configuration becomes a chain configuration having two robot positions. This is because, during \textsc{ReconstructChain} subroutine either all robots on multiplicity move together in the same direction or they do not move at all. Also, no singleton robots collide to decrease robot positions as all singleton robots on exactly one block (either clockwise or, in a counter-clockwise direction) move in the same direction and the last robots on two blocks do not collide as the distance between them is strictly greater than $k \ge 1$. this way it can be concluded that if for some $t\ge 0$, $C(t)$ is a chain configuration where the multiplicity node has $2 \le l'\le l$ robots and $l-l'$ singleton nodes, then there exists a $t'>t$ such that, $C(t')$ is a chain configuration where the multiplicity node contains $l'-1$ robots and $l-l'+1$ singleton nodes (Lemma~\ref{lemma:singleton node increases in spread} and Lemma~\ref{lemma: non-chain config to chain-config occurs in at most k-1 round.}) where $t' \le t+k-1$. This way there exists a round $t_0\le (l-2)k$ such that, the configuration $C(t_0)$ becomes a chain configuration where the multiplicity node has two robots and $l-2$ singleton nodes. From this configuration, in 1 round the configuration $C(t_0+1)$ becomes a dispersed configuration (due to execution of \textsc{Spread} subroutine). Now in the worst case, $C(t_0+1)$ is not the target configuration. Then again by Lemma~\ref{lemma: non chain config contains two blocks: chain block and non-chain block} must have two blocks. Also note that By Lemma~\ref{lemma: non-chain config to chain-config occurs in at most k-1 round.}, there exists $t_1 \le t_0+k$ such that $C(t_1)$ is the target configuration (As $C(t_0')$ can not be a chain configuration anymore where $t_0' > t_0$ as no collision occurs to create further multiplicity). Now since $t_1 \le  t_0+k\le (l-2)k+k=(l-1)k$, The algorithm terminates within $(l-1)k$ rounds. Now $l$ can be at most $\lfloor \frac{n}{k}\rfloor$. So the Algorithm~\ref{DkD_DYNAMICRING} terminates within $(l-1)k < lk = \lfloor \frac{n}{k}\rfloor k \le \frac{n}{k}k=n$. Thus, Algorithm \textsc{D-$k$-D DynamicRing} terminates in $O(n)$ rounds.
\end{proof}

From Theorem~\ref{th:f} and Corollary~\ref{th:f'} we have the following result:

\begin{theorem}
    Algorithm \textsc{D-$k$-D DynamicRing} terminates in $\Theta(n)$ rounds for any 1-interval connected ring with $n$ nodes.
\end{theorem}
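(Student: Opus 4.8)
The plan is to obtain the $\Theta(n)$ bound by sandwiching the running time between a matching upper bound and lower bound, both of which are already available in the excerpt. Recall that $T(n) = \Theta(n)$ means precisely that $T(n) = O(n)$ and $T(n) = \Omega(n)$ simultaneously, so it suffices to exhibit these two bounds and observe that they agree asymptotically.

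First I would invoke Theorem~\ref{th:f}, which establishes that \textsc{D-$k$-D DynamicRing} terminates within $(l-1)k < n$ rounds, and hence in $O(n)$ rounds. This supplies the upper bound. Next I would invoke Corollary~\ref{th:f'}, which asserts that \emph{any} D-$k$-D algorithm on a 1-interval connected ring with $n$ nodes requires $\Omega(n)$ rounds to terminate. The only point requiring care is that this lower bound applies to our specific algorithm: since \textsc{D-$k$-D DynamicRing} has been shown (through Lemmas~\ref{lemma:singleton node increases in spread}--\ref{lemma: non-chain config to chain-config occurs in at most k-1 round.} and Theorem~\ref{th:f}) to be a correct D-$k$-D algorithm on the 1-interval connected ring, the universally quantified lower bound of Corollary~\ref{th:f'} necessarily governs its running time as well.

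Combining the two bounds, the running time of \textsc{D-$k$-D DynamicRing} is simultaneously $O(n)$ and $\Omega(n)$, whence it is $\Theta(n)$. In particular this shows the algorithm is time optimal, as its worst-case round complexity matches the lower bound for the problem up to constant factors.

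I do not expect any genuine obstacle here, as all the substantive work has been carried out in the preceding results: Theorem~\ref{th:f} does the algorithmic analysis for the upper bound, and Theorem~2 together with Corollary~\ref{th:f'} does the adversarial argument for the lower bound. The final step is purely the definitional combination $O(n) \cap \Omega(n) = \Theta(n)$, so the ``proof'' amounts to citing the two prior statements and noting that our algorithm falls within the scope of the lower-bound claim.
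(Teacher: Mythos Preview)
Your proposal is correct and matches the paper's approach exactly: the paper derives this theorem directly from Theorem~\ref{th:f} (the $O(n)$ upper bound) and Corollary~\ref{th:f'} (the $\Omega(n)$ lower bound) without further argument. Your added remark that \textsc{D-$k$-D DynamicRing} is a correct D-$k$-D algorithm, and hence falls under the scope of Corollary~\ref{th:f'}, is a reasonable clarification but is not spelled out in the paper.
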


\subsection{Algorithm without chirality}
If robots have no chirality agreement, then by algorithm \emph{No-Chir-Preprocess} proposed in \cite{AAMKS2018ICDCN} by Agarwalla et al. can be used by $l$ co-located robots to agree on a particular direction before starting the execution of \textsc{D-$k$-D DynamicRing}. This way we can solve the D-$k$-D on a dynamic ring even when the robots do not agree on chirality.

\section{Conclusion}\label{Section:7}
The primary aim of this paper is to introduce the D-$k$-D problem, a generalization of all previously studied variants of the dispersion problem for mobile robots. This problem is examined within the context of a 1-interval connected ring under a fully synchronous scheduler. The paper begins by discussing the necessity of a fully synchronous scheduler to solve this problem, providing a lower bound on the time required, which is $\Omega(n)$ for a dynamic ring with $n$ nodes. Following this, the paper presents an algorithm, \textsc{D-$k$-D DynamicRing}, which solves the problem in optimal time for a rooted initial configuration, assuming chirality. Additionally, it is noted that for robots without chirality agreement, the problem can still be solved using a known technique (\cite{AAMKS2018ICDCN}).

Looking ahead, several open problems related to this variant of dispersion present opportunities for further research, especially with the goal of solving this problem in arbitrary networks. Interestingly, when extending the technique used to solve D-2-D in \cite{KM2023NETYS} by Kaur et al., the time and memory complexity becomes exponential even when $k=3$, making it an intriguing subject for study in arbitrary networks. Another potential research direction involves examining this variant under a limited visibility model, where a robot can only see up to a distance $\phi < \frac{n}{2}$ from its position on a dynamic ring.

\bibliographystyle{abbrv}
\bibliography{samplepaper}
\end{document}